\newcommand\reallywidehat[1]{%
\savestack{\tmpbox}{\stretchto{%
  \scaleto{%
    \scalerel*[\widthof{\ensuremath{#1}}]{\kern-.6pt\bigwedge\kern-.6pt}%
    {\rule[-\textheight/2]{1ex}{\textheight}}
  }{\textheight}%
}{0.5ex}}%
\stackon[1pt]{#1}{\tmpbox}%
}
\theoremstyle{definition}
\newtheorem{definition}{Definition}[section]
\theoremstyle{plain}
\newtheorem{theorem}{Theorem}[section]
\newtheorem{lemma}{Lemma}[section]
\newtheorem{proposition}{Proposition}[section]
\newtheorem{corollary}{Corollary}[section]
\title{Finite Model Property and Bisimulation for LFD}
\author{Raoul Koudijs
\institute{ILLC \\ Amsterdam, The Netherlands}
\email{raoulxluoar@gmail.com}}
\begin{document}
\maketitle

\begin{abstract}
Recently, Baltag \& van Benthem introduced a decidable logic of functional dependence (LFD) that extends the logic of Cylindrical Relativized Set Algebras (CRS) with atomic local dependence statements. Its semantics can be given in terms of generalised assignment models or their modal counterparts, hence the logic is both a first-order and a modal logic. We show that LFD has the finite model property (FMP) using Herwig's theorem on extending partial isomorphisms, and prove a bisimulation invariance theorem characterizing LFD as a fragment of first-order logic.
\end{abstract}

\section{\textbf{Introduction}}
Recently, Baltag \& van Benthem introduced a decidable logic of functional dependence (LFD) that extends the logic of Cylindrical Relativized Set Algebras (CRS) \cite{ModLangBoundFragmPredLog} with atomic dependence statements. The semantics is given in terms of dependence models\footnote{These are just the generalised assignment models known from \cite{ModLangBoundFragmPredLog}\cite{Exploring_Lo_Dyn}.}, which are pairs $(M,A)$ of a first-order structure $M$ together with a \textit{fixed} set of variable assignments (or 'team') $A\subseteq M^V$ on $M$, where $V$ is some (possibly finite) ambient set of variables. Formulas are evaluated at \textit{individual} assignments $s\in A$; in particular the dependence atoms get the following semantics: $s\models D_Xy$ if for all $t\in A$, $s\restriction X=t\restriction X$ implies $s(y)=t(y)$. This is in contrast with logics based on team semantics, where dependence formulas are evaluated at teams and this team is dynamically changed over the course of evaluation. Whereas most logics based on team semantics are undecidable, non-classical and have expressive going beyond FOL, LFD is decidable with a classical semantics and can be considered a fragment of FOL.

Many interesting notions of dependence (such as lineair dependence in vector spaces, temporal dependence in dynamical systems and strategic interaction in a multi-player game) can be formalized in LFD \cite{BaltagvBenthemLFD}. Moreover, LFD invites a natural epistemic interpretation where (sets of) variables may represent (groups of) agents, (joint) questions or objects.\cite{BaltagvBenthemLFD} The dependence modalities then capture distributed knowledge or the interrogative modality, while the dependence atoms capture epistemic superiority or inquisitive implication (or other 'mixed' notions). More spectacularly, \cite{Learning_What_Others_Know} introduces a complete and decidable dynamic-epistemic logic based on LFD with so-called 'reading events' as well as a notion of 'common distributed knowledge' that combines features of common knowledge and distributed knowledge.

Dependence models are closely related to relational databases: assignments are rows in the table and each variable represents a column, or attribute. Here is a simple numerical example of a dependence model, viewed as a database:
\begin{center}
\begin{tabular}{|c|c|c|} 
 \hline
 x & y & z\\ [0.5ex]
 \hline
 1 & 0 & 1\\
 1 & 0 & 0\\
 0 & 1 & 1\\
 2 & 0 & 2\\
 \hline
\end{tabular}
\end{center}
In this table we see that e.g. $y$ locally depends on $x$ in the first row, because the second row, which agrees on $x$ with the first, also agrees on $y$ with it (and no other rows agree on $x$ with the first row). In fact, this dependence holds at all rows, in which case we say that $y$ \textit{globally} depends on $x$. Conversely, $x$ does \textit{not} depend globally on $y$ because it does not locally depend on $y$ at the first row: both $1$ and $2$ occur as $x$-values of rows that share the current $y$-value $0$. Finally, because the fourth row is the only row with $z$-value $2$, all other variables locally depend on $z$ there.

The foregoing example witnesses the close connection between LFD and the study of dependence in databases, and indeed the Projection and Transitivity axioms of LFD recapture Armstrong's Axioms for functional dependence \cite{BaltagvBenthemLFD}. Deeper connections with database theory as well as team semantics might arise by introducing dynamics on the level of teams, generalizing the semantics to dependence universes, i.e. families of dependence models \cite{BaltagvBenthemLFD}. In particular, dependence models 

The decidability proof in \cite{BaltagvBenthemLFD} uses completeness of LFD w.r.t a purely syntactic 'type semantics' resembling the 'quasi-models' studied in connection with the Guarded Fragment \cite{Exploring_Lo_Dyn}\cite{ModLangBoundFragmPredLog}. The question whether LFD has the finite model property (FMP)
w.r.t. dependence models remained an open problem \cite{BaltagvBenthemLFD}. Our main result is that LFD has the FMP, by a new application of Herwig's theorem on extending partial isomorphisms. Moreover, we define dependence bisimulations and show that LFD can be characterized as the fragment of FOL that is invariant under this notion. Independently, another notion of bisimulation for LFD along more standard lines has been proposed in \cite{local_deps}. We show that these notions are equivalent, but that dependence bisimulations suggest a more efficient procedure for checking bisimilarity.
\section{Preliminaries}
We first introduce the language LFD, dependence models and type models. A pair $(V,\tau)$, where $V$ is set of variables and $\tau$ is a relational language is called a \textit{vocabulary}. When both $V$ and $\tau$ are finite, we say that $(V,\tau)$ is a finite vocabulary. We write $FOL[V,\tau]$ for the set of first-order formulas with variables in $V$ (both free and bound) and predicates in $\tau$, and similarly for $LFD[V,\tau]$. We assume that each vocabulary becomes equipped with an arity map $ar:\tau\to\mathbb{N}$.
\begin{definition}{(\textbf{Syntax})} The language $LFD[V,\tau]$ is recursively defined by:
\[\varphi::= P\mathbf{x}\;|\;\neg\varphi\;|\;\varphi\wedge\varphi\;|\;\mathbb{D}_X\varphi\;|\;D_Xy\]
where $X\subseteq V$ is a \textit{finite} set of variables, $y\in V$ an individual variable, $P\in\tau$ a predicate symbol and $\mathbf{x}=(x_1,...,x_n)\in V^{ar(P)}$ a finite string of variables.\footnote{LFD as a modal language is generated by the same definition, but where $D_Xy(\;),P\mathbf{x}(\;)$ become unary predicates in $\tau$.} Fixing notation, for any $Y\subseteq V$, we write $s\models D_XY$ if $s\models D_Xy$ holds for all $y\in Y$. We also skip the set brackets for singletons, writing $D_xY$ for $D_{\{x\}}Y$ , and $D_xy$ for $D_{\{x\}}\{y\}$. For every $\varphi\in LFD$, we define its free variables by:
\begin{itemize}
    \item $Free(Px_1...x_n)=\{x_1,...,x_n\}$
    \item $Free(D_Xy)=Free(\mathbb{D}_X\varphi)=X$
    \item $Free(\neg\varphi)=Free(\varphi)$,\; $Free(\varphi\wedge\psi)=Free(\varphi)\cup Free(\psi)$
\end{itemize}
\end{definition}
Moreover, we let $V_{\varphi}$ denote the set of variables \textit{occurring} in $\varphi$. This is in general a superset of the free of variables, i.e. $V_{D_Xy}=X\cup\{y\}$. Further, we say that $\tau_{\varphi}:=\{P\in\tau\;|\;P\;\textrm{occurs in}\;\varphi\}$.
\begin{definition}{(\textbf{Dependence Models})} A \textit{dependence model} (for the vocabulary $(V,\tau)$) $\mathbb{M}$ is a pair $\mathbb{M}=(M,A)$ of a relational structure $M$ for $\tau$, together with a fixed team $A\subseteq O^V$.\footnote{We use letters $M$ for first-order structures and blackboard bold letters $\mathbb{M}=(M,A)$ for dependence models.} For each $X\subseteq V$, we define an agreement relation $=_X$ on the team:
\[s=_Xt\;\textrm{iff}\;s\restriction X=t\restriction X\]
Note that $V$ may be finite. We call a dependence model \textit{distinguished} if all the assignments are injective.
\end{definition}

\begin{definition}{(\textbf{Semantics})} Truth of a formula $\varphi$ in a dependence model $\mathbb{M}=(M,A)$ at an assignment $s\in A$ is defined by the following clauses  (the Boolean cases are defined as usual:
\begin{align*}
& s\models P\mathbf{x}\;\textrm{iff}\;s(\mathbf{x})\in I^{\mathbb{M}}(P)\\
& s\models\mathbb{D}_X\varphi\;\textrm{iff}\;t\models\varphi\;\textrm{holds for all}\;t\in A \;\textrm{with}\; s =_X t\\
& s\models D_Xy\;\textrm{iff}\;s=_X t\;\textrm{implies}\; s=_yt\;\textrm{for all}\;t\in A.
\end{align*}
\end{definition}
Where $s(\mathbf{x})$ denotes the tuple $(s(x_1),...,s(x_m))$ for $\mathbf{x}=(x_1,...,x_m)$. Clearly, for every dependence model $(M,A)$ and assignments $s,t\in A$, there is a unique set $V^{s,t}:=\{v\in V\;|\;s=_vt\}$ that is \textit{the maximal set of variables on which $s,t$ agree}. An important feature of the semantics is LFD satisfies Locality.
\[\textrm{\textbf{Locality}}:\;\;\textrm{If}\;\; s=_Xt\;\textrm{and}\;Free(\varphi)\subseteq X,\;\;\textrm{then}\;\; s\models\varphi\;\textrm{iff}\;t\models\varphi\]
Next to dependence models, LFD is weakly complete w.r.t. a non-standard type semantics.\cite{BaltagvBenthemLFD} In other words, only LFD over finite vocabularies is complete for this semantics. Type models were used in \cite{BaltagvBenthemLFD} as a technical auxiliary to prove completeness and decidability. Types are defined relative to closures. We obtain the \textit{closure} $\Psi:=Cl(\psi)$ of a formula $\psi$ by adding to $\{\psi\}$ all formulas $D_Xy$ for $X\cup\{y\}\subseteq V_{\psi}$ and closing the resulting set under subformulas and single negation. \footnote{For every non-negated formula $\varphi$ (i.e. a formula whose principal connective is not $\neg$) we add $\neg\varphi$ to the closure, and for negated formulas we we do nothing. The resulting closure set will not contain any formulas with double negations.} 

\begin{definition}{($\Psi$-\textbf{Types})} Let $\Psi$ be a closure in $LFD[V,\tau]$. A subset $\Sigma\subseteq\Psi$ is a $\Psi$-\textit{type} if it satisfies the following conditions (where all formulas mentioned run over $\Psi$ only):
\begin{itemize}
    \item [(a)] $\neg\psi\in\Sigma$ iff $\psi\not\in\Sigma$
    \item [(b)] $(\psi\wedge\chi)\in\Sigma$ iff $\psi\in\Sigma$ and $\chi\in\Sigma$
    \item [(c)] if $\mathbb{D}_X\psi\in\Sigma$, then $\psi\in\Sigma$
    \item [(d)] $D_Xx\in\Sigma$ for all $x\in X\subseteq V$
    \item [(e)] $D_XY,D_YZ\in\Sigma$ implies $D_XZ\in\Sigma$
\end{itemize}
\end{definition}
For $X\subseteq V_{\psi}$, we define a relation $\sim_X$ on types $\Sigma,\Delta\subseteq\Psi$:
\begin{align*}
\Sigma\sim_X\Delta\qquad\textrm{iff}\qquad&\{\phi\in\Sigma\;|\;Free(\phi)\subseteq D^{\Sigma}_X\}=\{\phi\in\Delta\;|\;Free(\phi)\subseteq D^{\Sigma}_X\}
\end{align*}
where $D^{\Sigma}_X=\{y\in V_{\varphi}\;|\;D_Xy\in\Sigma\}$ is the \textit{dependence-closure} of $X$ w.r.t and $\Sigma$. Observe that $\Sigma\sim_X\Delta$ implies $D^{\Sigma}_X=D^{\Delta}_X$ as $Free(D_Xy)=X$. 

\begin{definition}{(\textbf{Type Models})} A \textit{type model} (for $\Psi$) is a family of $\Psi$-types satisfying:
\begin{itemize}
    \item if $\neg\mathbb{D}_X\neg\psi\in\Sigma\in\mathfrak{M}$, then there exists a $\Delta\in\mathfrak{M}$, such that $\psi\in\Delta$ and $\Sigma\sim_X\Delta$.
    \item $\Sigma\sim_{\emptyset}\Delta$ holds for all $\Sigma,\Delta\in\mathfrak{M}$.
\end{itemize}
Type models are always finite, as there are only finitely many $\Psi$-types for a given closure $\Psi$. This proves the decidability as LFD is (weakly) complete w.r.t. type models \cite{BaltagvBenthemLFD}. The semantic conditions for type models are given by membership:
\[\Delta\models\psi\qquad\textrm{iff}\qquad\psi\in\Delta\]
\end{definition}
\subsection{\textbf{Tree Model Property}}
Every satisfiable LFD formula can be satisfied on a certain tree-like dependence model. This fact follows from the fact that dependence models and type models provide equivalent semantics for LFD, i.e. each type model can be represented as a dependence model and vice versa \cite{BaltagvBenthemLFD}. The interesting direction is representing arbitrary type models as dependence models by means of an \textit{unravelling} construction in the sense of modal logic.
To say what we mean by 'tree-like' we need the graph-theoretic notion of a $k$-tree (the definition is taken from \cite{grAdel_1999}). Say that an $r$-tuple of objects $\mathbf{a}$ from a $\tau$-structure $M$ is \textit{live} in $M$, if there is some $r$-ary $P\in\tau$ such that $M\models P\mathbf{a}$.

\begin{definition}{(\textbf{$k$-Tree})} A $\tau$-structure $M$ is a \textit{$k$-tree} if there exists a tree (i.e. an acyclic, connected graph) $T=(V,E)$ and a function $F:V\to\{M'\subseteq M\;|\;|M'|\leq k\}$, assigning to every node $v\in V$ of $T$ a set $F(v)$ of at most $k$ elements of $M$, such that the following two conditions hold.
\begin{itemize}
    \item [(i)] For every live tuple $\mathbf{a}=(a_1,...,a_r)$ from $M$, there is some node $v$ such that $\{a_1,...,a_r\}\subseteq F(v)$.
    \item [(ii)] For every element $a$ of $M$, the set of nodes $\{v\in V\;|\;a\in F(v)\}$ is connected (and hence induces a subtree of $T$).
\end{itemize}
$M$ is of finite branching degree if $T$ is, that is if the set of neighbours of every node in $T$ is finite.
\end{definition}

\begin{theorem}{\textbf{Representation of Type Models} \cite{BaltagvBenthemLFD}}\\
Let $\mathfrak{M}$ be a type model for $\Psi$. There exists a dependence model $\mathbb{M}=(M,A)$ with $\mathfrak{M}=\{type_{\Psi}(s)\;|\;s\in A\}$.
\end{theorem}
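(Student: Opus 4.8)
The plan is to prove the theorem by an \emph{unravelling} construction, in the spirit of the tree-model constructions for the guarded fragment but driven by the agreement relations $\sim_X$ on types. First I would build a forest $T$: one tree for each choice of root $\Sigma_0\in\mathfrak{M}$, whose nodes are finite alternating sequences $\sigma=(\Sigma_0,X_1,\Sigma_1,\dots,X_n,\Sigma_n)$ with every $\Sigma_i\in\mathfrak{M}$, every $X_i\subseteq V_\psi$, and $\Sigma_{i-1}\sim_{X_i}\Sigma_i$; write $tp(\sigma):=\Sigma_n$. From a node $\sigma$ I add two kinds of children: (i) a \emph{witness} child $\sigma\cdot(X,\Delta)$ whenever $\neg\mathbb{D}_X\psi\in tp(\sigma)$, with $\Delta$ supplied by the existential clause of the definition of type model (so $tp(\sigma)\sim_X\Delta$ and $\psi\notin\Delta$); and (ii) for every $X\subseteq V_\psi$ an \emph{inert} child $\sigma\cdot(X,tp(\sigma))$, which is a legitimate node since $\sim_X$ is reflexive. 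As $\Psi$ and $V_\psi$ are finite, every node has finitely many children.

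Next I would turn $T$ into a dependence model. The objects are the $\approx$-equivalence classes of variable-occurrence pairs $(v,\sigma)$, where $\approx$ is generated by declaring $(v,\sigma)\approx(v,\sigma')$ whenever $\sigma$ and $\sigma'$ are adjacent, say $\sigma'=\sigma\cdot(X,\Delta)$, and $v\in D^{tp(\sigma)}_X$. Put $s_\sigma(v):=[(v,\sigma)]_\approx$, let $A:=\{s_\sigma:\sigma\in T\}$, and let $M$ be the $\tau$-structure on these objects with $I(P):=\{([(v_1,\sigma)],\dots,[(v_r,\sigma)]):\sigma\in T,\ Pv_1\cdots v_r\in tp(\sigma)\}$ for each $r$-ary $P\in\tau$. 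Since $\approx$ relates only occurrences of one and the same variable, each $s_\sigma$ is injective, so $(M,A)$ is in fact a \emph{distinguished} dependence model.

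The technical core is a \textbf{path lemma}: for nodes $\sigma,\sigma'$ and $X\subseteq V_\psi$, one has $s_\sigma=_Xs_{\sigma'}$ iff $\sigma,\sigma'$ lie in the same tree and every edge on the unique $T$-path between them, say with source-type $\Sigma$ and label $Y$, satisfies $X\subseteq D^\Sigma_Y$ (for $X=\emptyset$ this is vacuous, matching the fact that $=_\emptyset$ links all assignments across trees). This follows directly from the tree structure and the definition of $\approx$, applied variable by variable. Together with the observation that $\Sigma\sim_Y\Delta$ means $\Sigma$ and $\Delta$ contain exactly the same formulas of $\Psi$ whose free variables lie in $D^\Sigma_Y\,(=D^\Delta_Y)$, the path lemma shows that membership of any $\varphi\in\Psi$ is constant along a path all of whose edge dep-closures contain $Free(\varphi)$. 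I would then prove the \textbf{Truth Lemma}: $s_\sigma\models\varphi$ iff $\varphi\in tp(\sigma)$, for all $\sigma\in T$ and $\varphi\in\Psi$, by induction on $\varphi$. The Boolean cases are immediate, and the atomic case uses the path lemma and the fact that $[(v_i,\sigma)]=[(x_i,\sigma')]$ forces $v_i=x_i$. For $\mathbb{D}_X\psi$: left to right, type-condition (c) plus the path lemma carries $\mathbb{D}_X\psi$, hence $\psi$, to every $\sigma'$ with $s_\sigma=_Xs_{\sigma'}$; right to left, a witness child of type (i) gives a node where $\psi$ fails. For $D_Xy$: if $D_Xy\in tp(\sigma)$ and $s_\sigma=_Xs_{\sigma'}$, then by the path lemma every edge on the path has $X$ in its dep-closure, and type-conditions (d) and (e) (transitivity) force $y$ into it too, so $s_\sigma=_ys_{\sigma'}$; conversely, if $D_Xy\notin tp(\sigma)$ then the inert child $\sigma'=\sigma\cdot(X,tp(\sigma))$ has $s_\sigma=_Xs_{\sigma'}$ but, as $y\notin D^{tp(\sigma)}_X$, not $s_\sigma=_ys_{\sigma'}$.

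Finally, the Truth Lemma yields $type_\Psi(s_\sigma)=tp(\sigma)\in\mathfrak{M}$ for every $\sigma$, while each $\Sigma\in\mathfrak{M}$ is realized at its own root; the cross-tree instance of $\mathbb{D}_\emptyset\psi$ is handled by the second clause of the definition of type model ($\Sigma\sim_\emptyset\Delta$ for all $\Sigma,\Delta\in\mathfrak{M}$), which propagates $\psi$ across roots. Hence $\mathfrak{M}=\{type_\Psi(s):s\in A\}$. I expect the main obstacle to be pinning down the edge set of the unravelling: the witness children are the familiar ingredient, but the inert children are what make the backward direction of the $D_Xy$ clause go through, and one must simultaneously check that adding them creates no spurious agreements — which is precisely what the path lemma guarantees, since an inert edge labelled $X$ links $(v,\cdot)$-occurrences only for $v\in D^{tp(\sigma)}_X$.
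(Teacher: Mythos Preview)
Your unravelling is essentially the paper's construction (good paths, path assignments, Truth Lemma), and your explicit isolation of the \emph{inert} children as the ingredient driving the backward direction of the $D_Xy$ clause is exactly right. There is, however, a genuine gap coming from your choice to build a \emph{forest} rather than a single tree. Suppose $D_\emptyset y\in\Psi$ and some (hence, by $\Sigma\sim_\emptyset\Delta$, every) type in $\mathfrak{M}$ contains $D_\emptyset y$. Take $\sigma,\sigma'$ lying in two different trees of your forest. Then $s_\sigma=_\emptyset s_{\sigma'}$ trivially, but since $\approx$ is generated by edges \emph{within} trees, $[(y,\sigma)]\neq[(y,\sigma')]$, so $s_\sigma\neq_y s_{\sigma'}$. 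Hence $s_\sigma\not\models D_\emptyset y$ even though $D_\emptyset y\in tp(\sigma)$, and the Truth Lemma fails. Your path-lemma argument for the forward direction of $D_Xy$ silently assumes there \emph{is} a $T$-path between $\sigma$ and $\sigma'$, which is exactly what breaks across trees.

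The paper sidesteps this by fixing a \emph{single} root $\Sigma_0$ and allowing \emph{all} extensions $(\pi,X,\Delta)$ with $last(\pi)\sim_X\Delta$, not just witness and inert ones. This buys two things at once: (i) every $\Delta\in\mathfrak{M}$ is already realized at depth~$2$ via $\pi_\Delta=(\Sigma_0,\emptyset,\Delta)$, so no forest is needed to cover $\mathfrak{M}$; and (ii) the tree is connected, so the path-lemma/transitivity argument for $D_Xy$ (including $X=\emptyset$) goes through uniformly. The minimal repair to your proof is to drop the forest, keep one root $\Sigma_0$, and add, besides your witness and inert children, the children $\sigma\cdot(\emptyset,\Delta)$ for every $\Delta\in\mathfrak{M}$ (legitimate since $\Sigma\sim_\emptyset\Delta$ always); alternatively, just take all $\sim_X$-compatible children as the paper does.
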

\begin{proof}
Let $m=|\mathfrak{M}|$ and $k=|V|$ where $V$ is the set of variables occurring in formulas in $\Psi$ (i.e. $V=\bigcup\{V_{\psi}\;|\;\psi\in\Psi\}$). Fix a type $\Sigma_0\in\mathfrak{M}$. A \textit{good path} is a sequence $\pi=\langle\Sigma_0,X_1,...,X_n,\Sigma_n\rangle$ with $n>0$ such that for each $i\leq n$ (a) $\Sigma_i\in\mathfrak{M}, X_i\subseteq V$ and (b) $\Sigma_{i-1}\sim_{X_i}\Sigma_i$. Write $last(\pi)=\Sigma_n$ for the last element of $\pi$, and $lh(\pi)=n+1$ for the length of $\pi$ (not counting the variable sets). For each good path $\pi$, we define the \textit{path assignment} $v_{\pi}$, assigning objects of the form $(\pi,v)$ to variables $v\in V$:
\begin{align}
    & v_{\pi}(v)=(\pi,v)\;\textrm{if}\;\pi\;\textrm{has length 1, i.e.}\;\pi=\langle\Sigma_0\rangle\;\textrm{is the root of our tree}.\\
    & v_{\pi}(v)=v_{\pi'}(v)\;\textrm{if}\;\pi=(\pi',X,\Sigma)\;\textrm{with}\;v\in D^{last(\pi')}_X\\
    & v_{\pi}(v)=(\pi,v)\;\textrm{if}\;\pi=(\pi',X,\Sigma)\;\textrm{with} v\not\in D^{last(\pi')}_X
\end{align}
So new objects are created whenever the value for a variable is not locally determined by the predecessor path. We obtain a team $A:=\{v_{\pi}\;|\;\pi\;\textrm{a good path}\}$ on the structure $M$ with domain $\bigcup_{v_{\pi}\in A}v_{\pi}[V]$ and where an $r$-ary $P\in\tau$ holds of an $r$-tuple $((\pi_1,x_1),...,(\pi_r,x_r))$ iff all paths $\pi_i$ are linearly ordered by initial segment and the formula $P\mathbf{x}\in last(\pi_j)$, where $\pi_j$ is the longest path amongst $\{\pi_1,...,\pi_r\}$.

This yields a \textit{distinguished} dependence model $\mathbb{M}=(M,A)$ whose objects $(\pi,x)$ are typed by a unique variable $x$. The set of all good paths, ordered by initial segment, forms a tree $T$ whose branching degree is bounded by $2^k\times m$. Together with the map $\pi\mapsto v_{\pi}[V]$, this shows that $M$ is a $k$-tree of finite branching degree. Finally, we have the following crucial \textit{truth lemma} \cite{BaltagvBenthemLFD}:
\begin{lemma}{\textbf{Truth Lemma}}\\
For all formulas $\psi\in\Psi$ and good paths $\pi:\;\;$ $\mathbb{M},v_{\pi}\models\psi$ iff $\psi\in last(\pi)$
\end{lemma}
This lemma implies that $\mathfrak{M}=\{type_{\Psi}(s)\;|\;s\in A\}$: because every type $\Delta\in\mathfrak{M}$ occurs as the $last(\pi)$ for some unique good path of length 2 already, namely $\pi_{\Delta}:=(\Sigma_0,\emptyset,\Delta)$. Moreover, note that we are free to choose the initial fixed type $\Sigma_0$ from $\mathfrak{M}$ in the definition of good path, and hence, by the truth lemma, we can choose what type to be satisfied at the root.
\end{proof}

\begin{corollary}{\textbf{Tree Model Property}}\\
If $\psi\in LFD$ is satisfiable and $|V_{\psi}|=k$, there is a dependence model $\mathbb{M}=(M,A)$, where $M$ is $k$-tree of finite branching degree, satisfying $\varphi$ at the root assignment.
\end{corollary}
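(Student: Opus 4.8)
The plan is to read the Tree Model Property off the Representation Theorem of type models and its construction, using the fact --- established in \cite{BaltagvBenthemLFD} and recalled above --- that dependence-model semantics and type-model semantics are equivalent for LFD. So suppose $\psi$ is satisfiable and set $\Psi := Cl(\psi)$. First I would note the bookkeeping identity that $k = |V_\psi|$ in the sense of the Representation Theorem: the closure operation only adjoins dependence atoms $D_Xy$ with $X\cup\{y\}\subseteq V_\psi$ and then closes under subformulas and single negation, neither of which introduces new variables, so the set $V := \bigcup\{V_\phi : \phi\in\Psi\}$ of all variables occurring in $\Psi$ is exactly $V_\psi$. Since $\psi$ is satisfiable, taking $\Psi$-types in a dependence model satisfying $\psi$ yields, by the easy direction of the semantic equivalence, a type model $\mathfrak{M}$ for $\Psi$ containing a type $\Sigma_0$ with $\psi\in\Sigma_0$.

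Next I would feed $\mathfrak{M}$ into the construction from the proof of the Representation Theorem, choosing $\Sigma_0$ as the distinguished initial type. This produces a (distinguished) dependence model $\mathbb{M}=(M,A)$ together with the tree $T$ of good paths ordered by initial segment and the labelling $\pi\mapsto v_\pi[V]$ witnessing that $M$ is a $k$-tree; moreover the branching degree of $T$ is bounded by $2^k\times m$ with $m=|\mathfrak{M}|$, which is finite because $\Psi$, and hence the set of $\Psi$-types, is finite. Now $\langle\Sigma_0,\emptyset,\Sigma_0\rangle$ is a good path, since $\Sigma_0\sim_\emptyset\Sigma_0$ holds in any type model, and its last element is $\Sigma_0\ni\psi$; so the Truth Lemma yields $\mathbb{M},v_{\langle\Sigma_0,\emptyset,\Sigma_0\rangle}\models\psi$. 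Declaring this node the root of $T$ --- which is legitimate, since conditions (i)--(ii) defining a $k$-tree and the finiteness of the branching degree are insensitive to which node is taken as root --- gives a $k$-tree dependence model of finite branching degree satisfying $\psi$ at its root assignment.

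I do not expect a genuine obstacle here: all the substance lives in the Representation Theorem and its Truth Lemma, which are already in hand. The only points that need a word of care are the identity $k=|V_\psi|$ above and the fact that $\psi$ must end up true at the \emph{root} rather than merely somewhere in the tree --- and the latter is precisely the freedom, noted after the proof of the Representation Theorem, to choose the initial type $\Sigma_0$.
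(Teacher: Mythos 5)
Your proposal is correct and is essentially the paper's own argument: the corollary is read off directly from the Representation Theorem, its Truth Lemma, and the remark that the initial type $\Sigma_0$ may be chosen freely, so one simply picks $\Sigma_0\ni\psi$ from a type model obtained via the (weak) completeness of the type semantics. The only cosmetic difference is your detour through the length-2 path $\langle\Sigma_0,\emptyset,\Sigma_0\rangle$ and re-rooting; the Truth Lemma already gives $\psi$ at the actual root assignment $v_{\langle\Sigma_0\rangle}$ since $last(\langle\Sigma_0\rangle)=\Sigma_0$.
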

\begin{definition}{(\textbf{First-Order Translation})}
Although interpreted over a generalised semantics, LFD in \textit{finitely} many variables can be encoded back into FOL over standard structures. So let $V$ be a finite set of variables with enumeration $\mathbf{v}=(v_1,...,v_n)$. We double the amount of variables, creating a set of copied variables $V'$ from the variables in $V$. We ensure that the relevant assignments agree on their values for variables $v$ and their copies $v'$ by the conjunction $\mathbf{v}=\mathbf{v'}$.\footnote{This additional condition (it was not in the original formulation in \cite{BaltagvBenthemLFD}) is essential for encoding the semantics of the dependence atoms into FOL, which treats as variables as completely independent otherwise.} Further, we introduce a new $n$-ary predicate $A$ such that $A\mathbf{v}$ encodes the fact that the tuples of values assigned to $\mathbf{v}$ by the current assignment is the range of some admissible assignment from the team (this is a tuple because $V$ is finite). The first-order translation $tr:LFD[V,\tau]\to FOL[V\cup V',\tau]$ is defined by \cite{BaltagvBenthemLFD}:
\begin{itemize}
    \item $tr(P\mathbf{x}) = P\mathbf{x}$ and $tr$ commutes with Boolean connectives
    \item $tr(\mathbb{D}_X\psi) = \forall\mathbf{z}(A\mathbf{v}\to tr(\psi))$, where $\mathbf{v}$ is the enumeration of all the variables in $V$ and $\mathbf{z}$ is the enumeration of all the variables in $V-X$.
    \item $tr(D_Xy):= \forall\mathbf{z}\forall\mathbf{z'}((A\mathbf{v}\wedge A\mathbf{v}[\mathbf{z'}/\mathbf{z}])\to y = y')$, where $\mathbf{v},\mathbf{z}$ are as in part (d), $\mathbf{z'}$ and $y'$ are the corresponding fresh $V'$-copies of $\mathbf{z}$ and $y$ respectively.\footnote{Furthermore, $A\mathbf{v}[\mathbf{z'}/\mathbf{z}]$ denotes the formula that is obtained by replacing the variables $\mathbf{z}$ by $\mathbf{z'}$ in the formula $A\mathbf{v}$}.
\end{itemize}
There is a one-to-one correspondence between dependence models and structures in this extended language. If $\mathbb{M}=(M,A)$ is a dependence model, $T(\mathbb{M})$ is the expansion of $M$ with the interpretation $I(A):=\{s(\mathbf{v})\;|\;s\in A\}$. Conversely, given any $\tau\cup\{A\}$-structure $M'$ we obtain a team $A:\{s:V\to M'\;|\;s(\mathbf{v})\in I^{M'}(A)\}$ which together with a reduct of $M'$ makes for the corresponding dependence model. We have the equivalence:
\[\mathbb{M},s\models\varphi\qquad\textrm{iff}\qquad T(\mathbb{M}),s^+ \models\mathbf{v}=\mathbf{v'}\to tr(\varphi)\]
for every $s\in A$ and all assignments $s^+\in M^{\mathrm{Var}}$ extending $s$. This translation easily adapts to other local dependence atoms proposed in \cite{local_deps}, e.g. $tr(x=y):=\;x=y$ and $tr(\mathbf{x}\in\mathbf{y}):=\; \exists\mathbf{v'}(A\mathbf{v'}\wedge\bigwedge_{i\leq|\mathbf{x}|}x_i=y'_i)$.
\end{definition}

\section{\textbf{Characterization}}
The original paper \cite{BaltagvBenthemLFD} left finding a bisimulation-invariance theorem characterizing LFD as an open problem. precisely which formulas in $FOL[V\cup V',\tau\cup\{A\}]$ are equivalent to the $tr$-translation of an LFD-formula over standard structures.\footnote{There is also a \textit{modal translation} of LFD into FOL that extends the well-known standard translation of modal logic into the 2-variable fragment of FOL. A similar characterization theorem can be proved via this translation and the relational semantics for LFD, as our notion of bisimulation as well as the one proposed in \cite{local_deps} are naturally formulated on dependence models as well as their modal counterparts.} The following notion of \textit{dependence bisimulation} exactly characterizes $LFD$ as the largest fragment of $FOL$ invariant under this notion. Say that a set of variables $X$ is \textit{dependence-closed} at $s'$ if $D^{s'}_X:=\{y\in V\;|\;;s'\models D_Xy\}=X$, or equivalently if $s'\models D_Xy$ implies $y\in X$.
\begin{definition}{(\textbf{Dependence Bisimulation})} Let $\mathbb{M},\mathbb{M'}$ be dependence models. We say that a non-empty relation $Z\subseteq A\times A'$ is a \textit{dependence-bisimulation} if for every $(s,s')\in Z$:
\begin{itemize}
    \item [(\textbf{Atom})] $s\models P\mathbf{x}$ iff $s'\models P\mathbf{x}$
    \item [(\textbf{Forth})] For every $t\in A$, (i) the set $V^{s,t}$ is dependence-closed at $s'$ and\newline 
    there is some $t'\in A$ such that (ii) $s'=_{V^{s,t}}t'$ and (iii) $(t,t')\in Z$
    \item [(\textbf{Back})] symmetric to the (Forth) clause
\end{itemize}
\end{definition}
Dependence bisimulations are always \textit{total}; every state is related to another by the bisimulation.
\begin{proposition} LFD-formulas are invariant under dependence bisimulations.
\end{proposition}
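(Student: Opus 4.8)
The plan is a straightforward induction on the structure of the LFD-formula $\varphi$, establishing the slightly stronger statement that whenever $Z\subseteq A\times A'$ is a dependence bisimulation between $\mathbb{M}=(M,A)$ and $\mathbb{M}'=(M',A')$ and $(s,s')\in Z$, then $\mathbb{M},s\models\varphi$ iff $\mathbb{M}',s'\models\varphi$. The base case $\varphi=P\mathbf{x}$ is immediate from the (\textbf{Atom}) clause, and the Boolean cases follow directly from the induction hypothesis (using that $Z$ relates $s$ to $s'$ itself, with no need to pass to successors). So the real content lies in the two cases $\varphi=\mathbb{D}_X\psi$ and $\varphi=D_Xy$.

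For $\varphi=\mathbb{D}_X\psi$, suppose $\mathbb{M},s\models\mathbb{D}_X\psi$ and let $t'\in A'$ be arbitrary with $s'=_Xt'$; I would feed $t'$ to the (\textbf{Back}) clause to get $t\in A$ with $(t,t')\in Z$ and $s=_{V^{s',t'}}t$. Since $s'=_Xt'$ forces $X\subseteq V^{s',t'}$, restricting this agreement gives $s=_Xt$, hence $\mathbb{M},t\models\psi$; the induction hypothesis applied to $(t,t')\in Z$ then yields $\mathbb{M}',t'\models\psi$, and as $t'$ ranged over all $=_X$-successors of $s'$ we conclude $\mathbb{M}',s'\models\mathbb{D}_X\psi$. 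The converse direction is symmetric, using (\textbf{Forth}) in place of (\textbf{Back}); note that only parts (ii) and (iii) of those clauses are invoked here.

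Clause (i) on dependence-closedness is exactly what drives the atomic case $\varphi=D_Xy$, which is the step worth spelling out. First I would record the elementary monotonicity fact that $X\subseteq Y$ implies $D^s_X\subseteq D^s_Y$: if $t$ witnesses $y\notin D^s_Y$, meaning $s=_Yt$ but $s(y)\neq t(y)$, then $s=_Xt$ as well, so the same $t$ witnesses $y\notin D^s_X$. Now assume $\mathbb{M},s\models D_Xy$, i.e. $y\in D^s_X$, and take any $t'\in A'$ with $s'=_Xt'$; then $X\subseteq V^{s',t'}$, and the (\textbf{Back}) clause guarantees that $V^{s',t'}$ is dependence-closed at $s$, i.e. $D^s_{V^{s',t'}}=V^{s',t'}$. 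By monotonicity $y\in D^s_X\subseteq D^s_{V^{s',t'}}=V^{s',t'}$, so $s'(y)=t'(y)$; since $t'$ was an arbitrary $=_X$-successor of $s'$, this gives $\mathbb{M}',s'\models D_Xy$, and the converse is symmetric via (\textbf{Forth}).

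The main difficulty here is conceptual, not computational: one has to see why the ``global'' quantification over the whole team that is hidden in the semantics of $D_Xy$ and $\mathbb{D}_X\psi$ can nonetheless be matched by the purely local relation $Z$. The point is that the dependence-closedness condition (i) forces $s$ and $s'$ to agree on which sets of variables actually pin down their possible successors, which is precisely what $D_Xy$ reports; once this observation is in place the induction runs mechanically. I do not expect any trouble from the ambient variable set $V$ possibly being infinite, since every set $X$ occurring in a formula is finite by the syntax, so all the agreement sets and dependence-closures in play are well-behaved.
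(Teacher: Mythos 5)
Your proof is correct and follows essentially the same route as the paper's: the same induction, with (Back)/(Forth) clauses (ii)--(iii) handling $\mathbb{D}_X\psi$ and clause (i) (dependence-closedness of $V^{s',t'}$) combined with monotonicity of $D^s_{(\cdot)}$ handling $D_Xy$. The only difference is that you spell out the monotonicity fact explicitly, which the paper merely cites.
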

\begin{proof}
Let $\mathbb{M},\mathbb{M'}$ be dependence models and $Z\subseteq A\times A'$ a dependence bisimulation with $(s,s')\in Z$ and $\varphi\in$ LFD. We show that $s\models\varphi$ iff $s'\models\varphi$ by induction on the complexity of $\varphi$; the atomic and Boolean cases are trivial. For the other cases, we show only one direction.

($\mathbb{D}_X\psi)\quad$ Suppose that $s\models\mathbb{D}_X\psi$ and let $s'=_Xt'$, i.e. $X\subseteq V^{s',t'}$, for some $t'\in A'$. By the (Back)-clause there is some $t\in A$ such that $s=_Xt$ and $(t,t')\in Z$. Hence $t\models\psi$ and so $t'\models\psi$ by $(IH)$.

($D_Xy)\quad$ Suppose that $s\models D_Xy$ and let $s'=_Xt'$ for some $t'\in A'$. We want to show that $s'=_yt'$, i.e. $y\in V^{s',t'}$. By the (Back)-clause there is some $t\in A$ with $(t,t')\in Z$, $s=_{V^{s',t'}}t$ and $V^{s',t'}$ is dependence-closed at $s$. As $X\subseteq V^{s',t'}$, by monotonicity of dependence we have $s\models D_{V^{s',t'}}y$. This shows that $y\in V^{s',t'}$ as $V^{'s,t'}$ is dependence-closed at $s$.
\end{proof}

Dependence bisimulations in fact characterize LFD as a fragment of FOL. This can be shown by formulating an analogue of dependence bisimulations for structures of the form $T(\mathbb{M})$, and showing that on $\omega$-saturated structures of this form, LFD-equivalence implies dependence-bisimilarity.

Independently, another notion of bisimulation characterizing LFD has been proposed in \cite{local_deps} that treats dependence atoms like ordinary relational atoms. That is, instead of the dependence-closed condition they simply require that "$s\models D_Xy$ iff $s'\models D_Xy$" holds for all $X\cup\{y\}\subseteq V$. It follows that proposition 3.1 shows that dependence bisimulations are also bisimulations in their sense. Conversely, "$s\models D_Xy$ iff $s'\models D_Xy$" clearly implies the dependence-closed condition, hence the two notions are equivalent. It follows that the proof given in \cite{local_deps} also shows that LFD is the dependence bisimulation-invariant fragment of FOL.
\begin{theorem}{\textbf{Van Benthem Characterization}}\\
$LFD$ is the largest fragment of $FOL$ that is invariant under dependence bisimulations.
\end{theorem}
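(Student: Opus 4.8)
The plan is to run the classical van Benthem-style argument, using compactness and $\omega$-saturation, with the first-order translation $tr$ as the bridge between LFD and FOL. One inclusion is already in hand: Proposition 3.1 shows that every LFD-formula, hence (via $tr$) every formula in its image, is invariant under dependence bisimulations, so LFD is \emph{a} bisimulation-invariant fragment. It remains to prove maximality: every FOL-formula $\alpha$ over the extended vocabulary $\tau\cup\{A\}$, evaluated on structures $T(\mathbb{M})$ at assignments in $A$ under the guard $\mathbf{v}=\mathbf{v'}$, that is invariant under dependence bisimulations is equivalent, over dependence models, to $tr(\varphi)$ for some $\varphi\in LFD$.

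For maximality, let $LFDCons(\alpha)$ be the set of LFD-formulas $\varphi$ with $\alpha\models tr(\varphi)$. By compactness it suffices to show $LFDCons(\alpha)\models\alpha$, since then finitely many $\varphi_1,\dots,\varphi_n\in LFDCons(\alpha)$ already entail $\alpha$, whence $\alpha\equiv tr(\varphi_1\wedge\dots\wedge\varphi_n)$. So fix a dependence model with $\mathbb{M},s\models LFDCons(\alpha)$; we must show $\mathbb{M},s\models\alpha$. First, the theory $\{\alpha\}\cup\{tr(\varphi):\varphi\in LFD,\ \mathbb{M},s\models\varphi\}$ is consistent: otherwise compactness yields $\varphi_1,\dots,\varphi_n$ true at $s$ with $\alpha\models tr(\neg(\varphi_1\wedge\dots\wedge\varphi_n))$, so $\neg(\varphi_1\wedge\dots\wedge\varphi_n)\in LFDCons(\alpha)$, contradicting $\mathbb{M},s\models LFDCons(\alpha)$. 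Picking $\mathbb{N},t$ satisfying this theory, we get $\mathbb{N},t\models\alpha$, and $\mathbb{M},s$ and $\mathbb{N},t$ verify exactly the same LFD-formulas (the LFD-type of $s$ is closed under single negation).

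Next, pass to $\omega$-saturated elementary extensions of the first-order structures $T(\mathbb{M})$ and $T(\mathbb{N})$; by the translation equivalence these induce dependence models $\mathbb{M}^*\supseteq\mathbb{M}$ and $\mathbb{N}^*\supseteq\mathbb{N}$ on which LFD-truth at old assignments is unchanged and on which $\alpha$ is still true at $t$. It then suffices to prove the key lemma: \emph{on dependence models whose first-order translations are $\omega$-saturated, LFD-equivalence is a dependence bisimulation}. Granting this, $Z:=\{(u,u'):u\equiv_{LFD}u'\}\subseteq A^*\times B^*$ is a dependence bisimulation with $(s,t)\in Z$, so invariance of $\alpha$ together with $\mathbb{N}^*,t\models\alpha$ gives $\mathbb{M}^*,s\models\alpha$, and hence $\mathbb{M},s\models\alpha$ by elementarity.

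For the key lemma, (Atom) is immediate. For (Forth), let $(u,u')\in Z$ and $t\in A^*$. The set $V^{u,t}$ is finite (the vocabulary is finite) and is dependence-closed at $u$ by its maximality; since $\neg D_{V^{u,t}}y\in LFD$ for every $y\notin V^{u,t}$ and $u\equiv_{LFD}u'$, it is dependence-closed at $u'$ as well. For the witness: for any finite conjunction $\psi$ of LFD-formulas true at $t$, we have $u\models\neg\mathbb{D}_{V^{u,t}}\neg\psi$, witnessed by $t$ since $u=_{V^{u,t}}t$, hence $u'\models\neg\mathbb{D}_{V^{u,t}}\neg\psi$; so the first-order type over the finitely many parameters $u'(v)$ for $v\in V^{u,t}$ asserting membership in $A$, agreement with $u'$ on $V^{u,t}$, and $tr(\varphi)$ for every LFD-formula $\varphi$ true at $t$ is finitely satisfiable, and is therefore realized by $\omega$-saturation, yielding some $t'\in B^*$ with $u'=_{V^{u,t}}t'$ and $t\equiv_{LFD}t'$, i.e. $(t,t')\in Z$. (Back) is symmetric. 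The main obstacle is precisely this lemma — in particular the careful handling of the detour through $T(\mathbb{M})$ and its $\omega$-saturated elementary extension (checking that the team-correspondence and the translation equivalence survive) together with the realization step producing the witness $t'$; the dependence-closed condition itself, perhaps surprisingly, comes essentially for free from the availability of the atoms $\neg D_X y$ in the language.
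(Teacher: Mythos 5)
Your argument is correct, and it is precisely the route the paper gestures at in a single sentence (``formulate an analogue of dependence bisimulations for structures of the form $T(\mathbb{M})$ and show that on $\omega$-saturated structures of this form, LFD-equivalence implies dependence-bisimilarity''); you have filled in the details the paper omits. However, the paper's \emph{official} derivation of the theorem is different: it first observes that dependence bisimulations are equivalent to the bisimulations of \cite{local_deps} (Proposition 3.1 gives that every dependence bisimulation preserves the atoms $D_Xy$ and hence is a bisimulation in their sense, while their atom-matching condition trivially implies the dependence-closedness condition), and then simply imports the characterization theorem proved in \cite{local_deps}. Your direct proof buys self-containedness and makes explicit where the dependence-closedness condition comes from (the availability of $\neg D_Xy$ in the language plus maximality of $V^{u,t}$), whereas the paper's reduction is shorter but leans on an external result. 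Two points in your write-up deserve the care you already flag: the compactness and entailment steps must be taken relative to the elementary class of $\tau\cup\{A\}$-structures with the guard $A\mathbf{v}\wedge\mathbf{v}=\mathbf{v'}$ on the distinguished assignment (this is where finiteness of $V$ is essential, as $tr$ is only defined for finite $V$), and the type realized by $\omega$-saturation in the (Forth) step should explicitly include $A\mathbf{v}$ and the guard so that the realizing tuple really is (the $\mathbf{v}$-part of) a team assignment on which the translation equivalence applies. Neither is a gap, just bookkeeping.
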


Dependence bisimulations suggest a more efficient way to implement a \textit{bisimilarity-checking} algorithm for LFD compared to the definition in \cite{local_deps}. For what proposition 3.1 shows is that, given $(M,A),(M',A')$ with $s\in A,s'\in A'$, it actually suffices to check that "$s\models D_Xy$ iff $s'\models D_Xy$ for all $y\in V$" holds for all $X\in\{V^{s,t}\subseteq V\;|\;t\in A\}\cup\{V^{s',t'}\subseteq V\;|\;t'\in A'\}$ in order to conclude that "$s\models D_Xy$ iff $s'\models D_Xy$ for all $y\in V$" holds for all $X\subseteq V$. This could be used to avoid an exponential blow-up in $|V|$.

Dependence bisimulations generalise naturally to extensions of LFD. For instance, we can extend $LFD$ with the equality relation $=$, yielding the logic $LFD^=$ which was shown to be a conservative reduction class of FOL and hence undecidable in \cite{local_deps}. Dependence bisimulations with an extended (Atom) clause that also ranges over equality can be shown to characterize $LFD^=$ as a fragment to FOL. Interestingly, over \textit{full} dependence models (i.e. those $(M,A)$ with $A=M^V$, which are standard first-order structures repackaged as dependence models), dependence bisimulations (for LFD over a finite vocabulary $(V,\tau)$ with $|V|=k$) coincides with $k$-potential isomorphism, which characterizes first-order logic in $k$ variables.

\section{\textbf{Finite Model Property}}
We show that LFD has the FMP w.r.t the intended dependence model semantics, by an application of Herwig's theorem similar to the one in \cite{grAdel_1999}. Fix a satisfiable LFD-formula $\varphi$, and let $\Phi:=Cl(\{\varphi\})$. We let $(V,\tau):=(V_{\varphi},\tau_{\varphi})$ be the smallest vocabulary containing $\varphi$ and hence $\Phi$. Note that $(V,\tau)$ is a finite vocabulary, so let $|V|=k$. We know that there is a tree-like dependence model $\mathbb{M}=(M,A)$, with associated tree $T$ of good paths, satisfying $\varphi$ at the root assignment. Furthermore, the degree of $T$ is bounded by $m\times 2^k$, where $m$ is the number of distinct $\Phi$-types. Our strategy is as follows: we will cut the underlying $k$-tree $M$ to a finite structure, encode the dependence atoms in a richer language and finally use Herwig's theorem to generate out of this a finite dependence model that is bisimilar to the original tree-model. Define a sub-team of $A$ by:
\[A_{cut}:=\{v_{\pi}\in A\;|\;lh(\pi)\leq 3\}\]
and let $M_{cut}$ be the submodel of $M$ induced by $\bigcup\{v_{\pi}[V]\subseteq M\;|\;v_{\pi}\in A_{cut}\}$; we call $\mathbb{M}_{cut}:=(M_{cut},A_{cut})$ the \textit{cut-off} model. This is a finite model because the branching degree of $T$ is bounded and $V$ is finite. The truth lemma clearly no longer holds on this cut-off model, because some existential witnesses are missing for assignments of length 3.

We extend the language to include an $|X|$-ary relation $R^{X,y}$ for each $X\cup\{y\}\subseteq V$, and obtain the (still finite) richer language $\tau^+\supseteq\tau$. We will use these relations to encode the semantics of the dependence atoms. We expand the structure $M_{cut}$ underlying the cut-off model to a $\tau^+$ structure by putting:

\[I^{M_{cut}}(R^{X,y}):=\{v_{\pi}(\mathbf{x})\;|\;D_Xy\in last(\pi)\}\]

so that $\mathbb{M}_{cut},v_{\pi}\models R^{X,y}\mathbf{x}$ iff $D_Xy\in last(\pi)$. In the end, we want to show that $R^{x,y}\mathbf{x}\leftrightarrow D_Xy$ holds on the Herwig extension, so that we can recover an appropriate dependence model from it. To show this, we will need the following restricted version of this claim on the cut-off model:

\begin{proposition}
For each $v_{\pi}\in A_{cut}$ of length $lh(\pi)\leq 2:\quad v_{\pi}\models D_Xy\to R^{X,y}\mathbf{x}$.
\end{proposition}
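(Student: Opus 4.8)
The plan is to factor the statement through the combinatorial fact that, for a good path $\pi$ with $lh(\pi)\leq 2$, satisfaction of $D_Xy$ at $v_\pi$ \emph{inside the cut-off model} already forces $D_Xy\in last(\pi)$. Granting this, we are done immediately: if $D_Xy\in last(\pi)$ then $v_\pi(\mathbf{x})$ lies in $I^{M_{cut}}(R^{X,y})$ by the very definition of that interpretation (take $\pi$ itself as the witnessing path), hence $\mathbb{M}_{cut},v_\pi\models R^{X,y}\mathbf{x}$. So the real content is the forcing claim, which I would prove by contraposition: assuming $D_Xy\notin last(\pi)$, I will produce some $t\in A_{cut}$ with $v_\pi=_X t$ but not $v_\pi=_y t$, which shows $\mathbb{M}_{cut},v_\pi\not\models D_Xy$.

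For the counterwitness $t$ I would take a suitably chosen \emph{child} of $\pi$ in the tree of good paths. Put $\Sigma:=last(\pi)$. Since $\sim_X$ is reflexive, $\pi^+:=(\pi,X,\Sigma)$ is again a good path, of length $lh(\pi^+)=lh(\pi)+1\leq 3$, so $v_{\pi^+}\in A_{cut}$; set $t:=v_{\pi^+}$. By type-condition (d) we have $X\subseteq D^{\Sigma}_X$, so the second clause of the path-assignment definition gives $v_{\pi^+}(v)=v_\pi(v)$ for every $v\in D^{\Sigma}_X$, and in particular $v_\pi=_X v_{\pi^+}$. On the other hand $D_Xy\notin\Sigma$ means $y\notin D^{\Sigma}_X$, so the third clause makes $v_{\pi^+}(y)=(\pi^+,y)$ a freshly created object, whereas an easy induction on $lh(\pi)$ shows that $v_\pi(y)=(\rho,y)$ for some initial segment $\rho$ of $\pi$; since $\pi^+$ is strictly longer than $\pi$, we have $\rho\neq\pi^+$, and as the objects of $M$ are path–variable pairs with distinct labels denoting distinct objects, $v_\pi(y)\neq v_{\pi^+}(y)$. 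Hence $t=v_{\pi^+}$ refutes $D_Xy$ at $v_\pi$ within $\mathbb{M}_{cut}$, completing the contrapositive.

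The one step that genuinely needs care — and the reason the hypothesis $lh(\pi)\leq 2$ appears at all — is the length bookkeeping: the natural counterwitness $\pi^+$ has length $lh(\pi)+1$, which still lies inside the cut (i.e.\ is $\leq 3$) exactly when $lh(\pi)\leq 2$. For a path of length $3$ the analogous witness would have length $4$ and fall outside $A_{cut}$, so this argument deliberately stops there, which matches the fact that the proposition is only stated for short paths. Everything else is a routine unpacking of the unravelling construction together with the type conditions, so I expect no further obstacles.
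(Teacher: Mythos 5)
Your proof is correct and follows essentially the same route as the paper: contraposition via the child path $\pi^+=(\pi,X,last(\pi))$, whose assignment agrees with $v_\pi$ on $X$ but creates a fresh object at $y$ when $D_Xy\notin last(\pi)$. You additionally spell out two points the paper leaves implicit --- that $v_\pi(y)\neq v_{\pi^+}(y)$ because the objects are distinct path--variable pairs, and that $lh(\pi)\leq 2$ is exactly what keeps the counterwitness inside $A_{cut}$ --- both of which are accurate and worth making explicit.
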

\begin{proof}
By contraposition, so suppose that $v_{\pi}\not\models R^{X,y}\mathbf{x}$. This means that $D_Xy\not\in last(\pi)$, so for the good path $\pi^+:=(\pi,X,last(\pi))$ (it is a good path as $last(\pi)\sim_Xlast(\pi)$ trivially holds) we have that $v_{\pi}=_Xv_{\pi^+}$ and $v_{\pi}\ne_yv_{\pi^+}$, i.e. $v_{\pi}\not\models D_Xy$.
\end{proof}
Herwig's theorem on extending partial isomorphism \cite{Herwig1998ExtendingPI} is a result about first-order relational languages. It tells us that any finite structure with some set of partial isomorphisms on it has a finite extension in which all these partial isomorphisms extend to automorphisms. This theorem has already been used to show the FMP of the Guarded Fragment (GF) \cite{grAdel_1999}.
\begin{theorem}{\textbf{Herwig}}\\
Let $\sigma$ be a finite relational language, $C$ a finite $\sigma$-structure and $\{p_1,...,p_k\}$ a (finite) set of partial isomorphisms on $C$. Then there exists a \textit{finite} extension $C^+$ of $C$ that satisfies the following conditions:
\begin{itemize}
    \item [(i)] Every $p_i$ extends to a unique automorphism $\widehat{p_i}$ of $C^+$. This yields a subgroup $\langle\widehat{p_1},...,\widehat{p_k}\rangle$ of the automorphism group of $C^+$.
    \item [(ii)] If a tuple $\mathbf{a}=(a_1,....,a_r)$ from $C^+$ is live or $r=1$, then there exists an automorphism $f\in\langle\widehat{p_1},...,\widehat{p_k}\rangle$ such that for each $i\leq r$, $f(a_i)\in C$.
    \item [(iii)] If $\exists f\in\langle\widehat{p_1},...,\widehat{p_k}\rangle$ and $a,b\in C$ such that $f(a)=b$, then either $f=id$ or there is a unique $p\in\langle p_1,...,p_k\rangle$ such that $\widehat{p}=f$ and $p(a)=b$.
\end{itemize}
\end{theorem}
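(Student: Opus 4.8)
The plan is not to reprove Herwig's theorem from the ground up --- its core, that a finite relational structure equipped with finitely many partial isomorphisms embeds into a \emph{finite} superstructure on which each of them extends to an automorphism, is a deep result of \cite{Herwig1998ExtendingPI} whose proofs draw on nontrivial input from combinatorial group theory --- but rather to derive the packaged form with clauses (i)--(iii) from that core, following the refinement worked out for the guarded fragment in \cite{grAdel_1999}. Granting the core statement, clause (i) is essentially immediate: apply it to $C$ and $\{p_1,\dots,p_k\}$ to obtain a finite $C^+\supseteq C$ together with automorphisms $\widehat{p_i}\supseteq p_i$, and set $G:=\langle\widehat{p_1},\dots,\widehat{p_k}\rangle\le\mathrm{Aut}(C^+)$; uniqueness of $\widehat{p_i}$ follows once (ii) and (iii) are in place.

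The real work is in (ii) and (iii), since the bare extension theorem controls neither which tuples of $C^+$ become live nor which coincidences $f(a)=b$ between old points get realized. For (ii) I would run the construction inside a carefully chosen class rather than take an arbitrary extension: close $C$ under the partial maps generated by $\{p_1,\dots,p_k\}$ to a finite structure $C_0$ recording every forced coincidence, tag the old points by a fresh unary predicate, and apply Herwig's theorem to $C_0$ within the class of finite structures that are \emph{$C_0$-covered}, meaning that each relation instance sits inside a single $G$-translate of $C_0$. One then argues, by induction on how $C^+$ is amalgamated out of such translates, that any live tuple $\mathbf{a}$, being contained in one relation instance and hence in one translate $g\cdot C_0$, satisfies $g^{-1}(\mathbf{a})\subseteq C$ for a suitable $g\in G$; the degenerate case $r=1$ is the same argument read for singletons. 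For (iii) I would appeal to the tightness of the construction: $C^+$ can be built so that every nontrivial $f\in G$ equals $\widehat{p}$ for a uniquely determined composite $p$ of the $p_i^{\pm1}$, and when $f(a)=b$ with $a,b\in C$ this $p$ moreover satisfies $p(a)=b$ --- the only alternative being $f=\mathrm{id}$. The reason is that the relevant quotient of $G$ is a free product of the cyclic groups $\langle p_i\rangle$ subject only to the necessary identifications, which the construction can be made to respect (or, equivalently, one takes $|C^+|$ minimal).

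The main obstacle is clause (ii): neither the coverage property nor even the well-definedness of the ambient class and of $C_0$ is visible from the statement of the base theorem, and verifying that Herwig's construction introduces no unexpected guarded tuples is exactly the delicate point that \cite{grAdel_1999} had to settle in the guarded-fragment setting. I would lean on that treatment for these bookkeeping details, while citing --- rather than reproving --- the underlying extension theorem of \cite{Herwig1998ExtendingPI}, whose difficulty is orthogonal to anything specific about LFD.
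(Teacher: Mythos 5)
The paper does not prove this theorem at all: it is imported as an external tool, cited from Herwig's work in exactly the packaged form (with clauses (i)--(iii)) used by Gr\"adel for the guarded fragment, and the paper only adds a clarifying remark on condition (iii) and on the bijectivity of the map $p\mapsto\widehat{p}$. Your proposal likewise defers the substance to \cite{Herwig1998ExtendingPI} and \cite{grAdel_1999}, so it is consistent with the paper's treatment; the extra sketch of how (ii) and (iii) are extracted is optional colour rather than something the paper requires.
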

where $\langle p_1,...,p_k\rangle$ is the collection of all partial isomorphisms that can be obtained by composing the $p_i$ with their inverses. Note that $\langle p_1,...,p_k\rangle$ is strictly speaking \textit{not} a group as it need not be the case that $p\circ p^{-1}$ is the identity on $C$ (in general, it is the identity on a subset of $C$).

Condition (iii) is in need of further clarification. In words, it says that elements in the submodel $C$ are only mapped to each other by some $f\in\langle f_1,...,f_n\rangle$ if this is forced given the choice of partial isomorphisms. Uniqueness of $p$ in this condition is ensured by the fact that the map $\widehat{(\;)}$ extends to a bijective map $\widehat{(\;)}:\langle p_1,...,p_k\rangle\to\langle\widehat{p_1},...,\widehat{p_k}\rangle$ that commutes with the operations $\circ,(\;)^{-1}$ (and the identity $id$). By condition (i), $\widehat{(\;)}$ is defined on the subset $\{p_1,...,p_k\}$. Set $\widehat{p^{-1}}:=\widehat{p}^{-1}$ and $\widehat{p\circ p'}:=\widehat{p}\circ\widehat{p'}$; so commutation follows by definition. It immediately follows that the map is injective. For surjectivity, let $f\in\langle\widehat{p_1},...,\widehat{p_k}\rangle$. By definition $f=\widehat{p_{i_1}}^{\epsilon_1}\circ...\circ \widehat{p_{i_m}}^{\epsilon_m}$ for some $\{i_1,...,i_m\}\subseteq\{1,...,k\}$ and $\epsilon_j\in\{-1,1\}$ for each $j\leq m$. Define $p:=p_{i_1}^{\epsilon_1}\circ...\circ p_{i_m}^{\epsilon_m}\in\langle p_1,...,p_k\rangle$. Now observe:
\[\widehat{p}=\reallywidehat{p_{i_1}^{\epsilon_1}\circ...\circ p_{i_m}^{\epsilon_m}}=\widehat{p_{i_1}^{\epsilon_1}}\circ...\circ\widehat{p_{i_m}^{\epsilon_m}}=\widehat{p_{i_1}}^{\epsilon_1}\circ...\circ\widehat{p_{i_m}}^{\epsilon_m}=f\]
We proceed with specifying a choice of partial isomorphisms on the cut-off model. If $\pi$ is a good path of $lh(\pi)=3$ and $last(\pi)=\Delta$, then there is a partial isomorphism $p_{\pi}:v_{\pi}[V_{\varphi}]\to v_{\pi_{\Delta}}[V_{\varphi}]$ such that $p_{\pi}\circ v_{\pi}=v_{\pi_{\Delta}}$, where $\pi_{\Delta}:=\langle\Sigma_0,\emptyset,\Delta\rangle$ so $lh(\pi_{\Delta})=2$. We pick the finite set of partial isomorphisms $\{p_{\pi}\;|\;\pi\;\textrm{good path of}\;lh(\pi)=3\}=\{p_1,...,p_k\}$. The following proposition tells us what kind of partial isomorphisms 
are in $\langle p_1,...,p_k\rangle$.

\begin{lemma} If $p\in\langle p_1,...,p_k\rangle$ with $pv_{\pi}=_Xv_{\pi'}$, then there are $v_{\rho},v_{\rho'}\in A_{cut}$ with $last(\rho)=last(\rho')$ such that $v_{\rho}=_Xv_{\pi}$, $v_{\rho'}=_Xv_{\pi'}$ and $pv_{\rho}=v_{\rho'}$.
\end{lemma}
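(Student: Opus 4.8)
The plan is to induct on the length $m$ of a word $p=f_1\circ\cdots\circ f_m$ witnessing $p\in\langle p_1,\ldots,p_k\rangle$, where each $f_j$ is some chosen $p_i$ or its inverse. First I would record two structural facts about the unravelling underlying $\mathbb{M}$. (1) Every object of $M$ has the form $(\rho,x)$ for a unique variable $x\in V$ -- call $x$ its type -- since the path assignments only ever inherit a value for a variable from a predecessor or create a fresh value $(\pi,x)$ for the same variable, so each generator $p_\sigma$, hence each composite $p$, preserves types. (2) Because $\mathbb{M}$ is distinguished, $v_\rho[V]$ contains exactly one object of each type, namely $v_\rho(x)$. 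Combining (1) and (2), the domain of any (inverse-)generator $f$ is a full path-range $v_\delta[V]$, and $v_\pi(x)\in\mathrm{dom}(f)$ forces $v_\pi(x)=v_\delta(x)$. Applying this to the outermost (or innermost) factor of $p$, the hypothesis $pv_\pi=_Xv_{\pi'}$ already forces $v_\pi=_Xv_\delta$ and dually $v_{\pi'}=_Xv_{\delta'}$, where $v_\delta[V]$ and $v_{\delta'}[V]$ are the relevant domain and range path-ranges, and $X\subseteq V^{v_\pi,v_\delta}$, so nothing about $X$ is lost in passing from $\pi$ to $\delta$.

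Base case $m=1$: $p=p_\sigma$ or $p=p_\sigma^{-1}$ for a good path $\sigma$ with $lh(\sigma)=3$, $last(\sigma)=\Delta$, and $p_\sigma\circ v_\sigma=v_{\sigma_\Delta}$, $lh(\sigma_\Delta)=2$. If $p=p_\sigma$, the two facts give $v_\pi=_Xv_\sigma$ and $v_{\pi'}=_Xv_{\sigma_\Delta}$; take $\rho:=\sigma$ and $\rho':=\sigma_\Delta$, both in $A_{cut}$, with $last(\rho)=\Delta=last(\rho')$ and $pv_\rho=v_{\sigma_\Delta}=v_{\rho'}$. The case $p=p_\sigma^{-1}$ is symmetric, exchanging $\sigma$ and $\sigma_\Delta$. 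The asymmetry visible here -- a generator has a length-$3$ domain and a length-$2$ range, an inverse-generator the reverse -- is what makes the inductive step delicate.

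Inductive step $m>1$: write $p=f\circ q$ with $f$ the outermost factor and $q$ of word length $m-1$; let $v_\delta[V]=\mathrm{dom}(f)$, $v_{\delta'}[V]=\mathrm{ran}(f)$, so $last(\delta)=last(\delta')$ and $f\circ v_\delta=v_{\delta'}$. By the structural fact, $pv_\pi=_Xv_{\pi'}$ gives $qv_\pi=_Xv_\delta$ and $v_{\pi'}=_Xv_{\delta'}$. The induction hypothesis applied to $q$ and $qv_\pi=_Xv_\delta$ produces $v_{\rho_0},v_{\rho_0'}\in A_{cut}$ with $last(\rho_0)=last(\rho_0')$, $v_{\rho_0}=_Xv_\pi$, $v_{\rho_0'}=_Xv_\delta$, and $qv_{\rho_0}=v_{\rho_0'}$. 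Put $W:=V^{v_{\rho_0'},v_\delta}\supseteq X$. By the Truth Lemma, agreement of assignments on a variable set corresponds to the relation $\sim$ between their last types, so $last(\rho_0')\sim_W last(\delta')$, and $f$ carries $v_{\rho_0'}$ onto $v_{\delta'}$ on the variables of $W$. If $f$ is a generator, $\delta'$ has length $2$, so $\rho':=(\delta',W,last(\rho_0))$ is a good path of length $3$ in $A_{cut}$; then $v_{\rho'}=_Wv_{\delta'}$ (since $W\subseteq D^{last(\delta')}_W$), hence $v_{\rho'}=_Xv_{\pi'}$, and $pv_{\rho_0}=f(qv_{\rho_0})$ agrees with $v_{\rho'}$ on $W$, so $\rho:=\rho_0$ and this $\rho'$ work, with $last(\rho)=last(\rho_0)=last(\rho')$. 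If $f$ is an inverse-generator this recipe overshoots to length $4$, so there I would peel $f$ off the inside ($p=q\circ f$), where $\mathrm{dom}(f)$ is a length-$2$ path-range, and run the mirror argument. A reduced word begins with a generator, ends with an inverse-generator, or has an inverse-generator first and a generator last; the first two cases are handled by peeling from the outside resp.\ the inside, and the third by a further analysis of the two flanking factors (their canonical length-$2$ path-ranges either coincide, making $p$ essentially a generator between two length-$3$ path-ranges, or $p$ restricts to the identity on the globally constant variables).

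The main obstacle is exactly this tension between keeping the witness paths inside $A_{cut}$ (length $\le 3$) and keeping their last types equal: one cannot blindly push the induction-hypothesis witnesses through the peeled-off factor, since that overshoots to length $4$ whenever the factor's relevant controlling path already has length $3$. The resolution is the disciplined peeling above -- generators from the outside, inverse-generators from the inside -- together with the equivalence, via the Truth Lemma, between agreement of assignments on a variable set and the $\sim$-relation on last types, which lets one certify the extended path $(\delta',W,last(\rho_0))$ as a genuine good path; carrying out the reduced-word case distinction so that every $p$ is covered is the fiddly part. A secondary subtlety is that the maximal agreement set $V^{s,t}$ can shrink as one moves along the word, so one must check it never drops below $X$ -- which is where the full strength of $pv_\pi=_Xv_{\pi'}$, that $X$ lies inside every relevant agreement set, is used.
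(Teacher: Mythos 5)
Your two structural observations (objects of $M$ are typed by a unique variable, assignments are injective, hence the domain of each generator is a full path-range $v_\delta[V]$ and membership in it forces agreement) are exactly the right starting point, and your base case is the paper's. The problem is the inductive step. After peeling off the outer factor $f$ and applying the induction hypothesis to $q$, you only know $v_{\rho_0'}=_Xv_\delta$, so $pv_{\rho_0}=f(qv_{\rho_0})=f(v_{\rho_0'})$ is defined only on the agreement set $W=V^{v_{\rho_0'},v_\delta}$, and your freshly built good path $\rho'=(\delta',W,last(\rho_0))$ gives at best $pv_{\rho_0}=_Wv_{\rho'}$. That is strictly weaker than the lemma's third conclusion $pv_\rho=v_{\rho'}$, which is an equality of \emph{full} assignments; and the full equality is what is actually used later (in the Encoding Lemma one needs $\widehat{p^{-1}}v_{\rho'}=v_{\rho}$ at a variable $y\in D^{type(s)}_X$ that need not lie in $X$, so agreement on $X$, or even on $W$, does not suffice). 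In addition, your third reduced-word case (inverse-generator first, generator last) is only gestured at, so the case analysis is not closed.

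The paper's proof sidesteps all of the length bookkeeping and the generator-versus-inverse split that your one-factor-at-a-time peeling creates. Writing $p=p_{i_m}^{\epsilon_m}\circ\cdots\circ p_{i_1}^{\epsilon_1}$, each factor by construction maps one full path assignment (its canonical domain, of length $3$ or $2$) onto another (its canonical codomain), and these two have \emph{equal} last types. One takes $\rho:=\pi_0$ where $v_{\pi_0}[V]=dom(p_{i_1}^{\epsilon_1})$; your own typing argument gives $v_\pi=_Xv_{\pi_0}$ and hence $pv_{\pi_0}=_Xpv_\pi=_Xv_{\pi'}$. Then one chains the canonical path assignments through the whole word: $pv_{\pi_0}=v_{\pi_m}$ as full assignments, with $last(\pi_m)=last(\pi_0)$ by transitivity of equality of last types along the chain, and $\rho':=\pi_m$ is the second witness. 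No new good paths are constructed, so there is nothing to certify as good and no risk of overshooting length $3$. If you want to repair your argument, the fix is to abandon transporting the IH witnesses through $f$ and instead let both witnesses be the canonical domain/codomain path assignments of the word itself, which is essentially to run the paper's chain in one pass.
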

\begin{proof}
Let $p\in\langle p_1,...,p_k\rangle$ such that $pv_{\pi}=_Xv_{\pi'}$. By definition, $p=p_{i_m}^{\epsilon_m}\circ...\circ p_{i_1}^{\epsilon_1}$ for some $\{i_1,...,i_m\}\subseteq\{1,...,k\}$ and $\epsilon_j\in\{-1,1\}$ for each $1\leq j\leq m$. Note that for each $j\leq m$ we have that $p_{i_j}\in\{p_1,...,p_k\}=\{p_{\pi}\;|\;\pi\;\textrm{a good path of}\;lh(\pi)=3\}$, so $p_{i_j}^{\epsilon_j}\circ v_{\pi_{j-1}}=v_{\pi_j}$ \footnote{More specifically  $p_{i_j}^{\epsilon_j}\circ v_{\pi_{j-1}}=_Vv_{\pi_j}$, but this is the same as equality as $dom(v_{\pi_j})=dom(v_{\pi_{j_1}})=V$. Another way of putting this is that $dom(p_{i_j}^{\epsilon_j})=v_{\pi_{j-1}}[V]$ and $cod(p_{i_j}^{\epsilon_j})=v_{\pi_j}[V]$.} for some $v_{\pi_{j-1}},v_{\pi_j}\in A_{cut}$ with $last(\pi_{j-1})=last(\pi_j)$. In particular, there are $v_{\pi_0},v_{\pi_1}\in A_{cut}$ such that $last(\pi_0)=last(\pi_1)$ and $p_{i_1}^{\epsilon_1}v_{\pi_0}=v_{\pi_1}$. Set $\rho:=\pi_0$. It follows that $v_{\pi}=_Xv_{\pi_0}$ and so $pv_{\pi_0}=_Xpv_{\pi}=_Xv_{\pi'}$, i.e.
\[pv_{\pi_0}=p_{i_m}^{\epsilon_m}\circ...\circ p_{i_1}^{\epsilon_1}v_{\pi_0}=_Xv_{\pi'}\]
This was the base case for an inductive argument up to $m$. So let $j\leq m$ and suppose that $v_{\pi_j}\in A_{cut}$ with $last(\pi_j)=last(\pi_0)$ and 
\[pv_{\pi_0}=p_{i_1}^{\epsilon_1}\circ ... \circ p_{i_{j+1}}^{\epsilon_{j+1}}v_{\pi_j}=_Xv_{\pi'}\]
Now recall that $p_{i_{j+1}}^{\epsilon_{j+1}}v_{\pi_j}=v_{\pi_{j+1}}$ for some $v_{\pi_{j+1}}\in A_{cut}$ with $last(\pi_{j+1})=last(\pi_j)$. Moreover, it follows that $p_{i_1}^{\epsilon_1}\circ...\circ p_{i_{j+2}}^{\epsilon_{j+2}}v_{\pi_{j+1}}=_Xv_{\pi'}$. Hence by induction, there is some $v_{\pi_m}\in A_{cut}$ with $pv_{\pi_0}=v_{\pi_m}$ such that $last(\pi_m)=last(\pi_0)$ and
\[v_{\pi_m}=p_{i_m}^{\epsilon_m}\circ...\circ p_{i_1}^{\epsilon_1}v_{\pi_0}=pv_{\pi_0}=_Xv_{\pi'}\]
then for $\rho=\pi_0$ and $\rho'=\pi_m$ we have proved the lemma 
\end{proof}
The associated first-order structure $T(\mathbb{M}_{cut})$ of the Herwig extension is a finite model in a finite relational language $\tau^+\cup\{A\}$, and $\{p_1,...,p_k\}$ is a finite set of partial isomorphisms on it. Hence, by Herwig's theorem, there exists a \textit{finite} extension $T(\mathbb{M}_{cut})^+$ of this structure, the \textit{Herwig extension}, satisfying conditions (i)-(iii) w.r.t $\{p_1,...,p_k\}$. It is easy to see that the Herwig extension corresponds in the canonical way (i.e. see the first-order translation above) to a dependence model $\mathbb{M}_{cut}^+:=(M_{cut}^+,A_{cut}^+)$ such that $T(\mathbb{M}_{cut}^+)=T(\mathbb{M}_{cut})^+$. Recall that we want to establish a bisimulation between the finite Herwig extension $\mathbb{M}_{cut}^+$ and the infinite tree model $\mathbb{M}$. To do this, we will need the following lemmas.

\begin{lemma}{\textbf{Level 2 Lemma}}\\ For every $s\in A_{cut}^+$ there is an $f\in\langle\widehat{p_1},...,\widehat{p_k}\rangle$ such that $f\circ s=v_{\pi}\in A_{cut}$ where $lh(\pi)\leq 2$.
\begin{proof}
Let $s\in A_{cut}^+$. Then the tuple $s(\mathbf{v})\in I(A)$ is live in $T(\mathbb{M}_{cut}^+)$. Hence by condition (ii) there is some automorphism $f\in\langle\widehat{p_1},...,\widehat{p_k}\rangle$ such that $fs(\mathbf{v})$ is a tuple of objects of the submodel $T(M_{cut})$. As $f$ is an isomorphism, it follows that $fs(\mathbf{v})\in I(A)$ as well. But this can only be if $fs(\mathbf{v})=v_{\pi}(\mathbf{v})$ for some $v_{\pi}\in A_{cut}$. Now suppose that $lh(\pi)=3$, with $last(\pi)=\Delta$, then by (i) there is an automorphism $\widehat{p_{\pi}}$ such that $\widehat{p_{\pi}}f\in\langle\widehat{p_1},...,\widehat{p_k}\rangle$ and $\widehat{p_{\pi}}fs=v_{\pi_{\Delta}}$, where $lh(\pi_{\Delta})=2$. Hence we may assume that there exists some $g\in\langle\widehat{p_1},...,\widehat{p_k}\rangle$ such that $gs=v_{\pi}$ for some path assignment $v_{\pi}\in A_{cut}$ of length $lh(\pi)\leq 2$.
\end{proof}
\end{lemma}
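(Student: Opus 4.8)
The plan is to use Herwig's condition (ii) to ``pull'' an arbitrary assignment of the Herwig extension back into the base structure, and then condition (i) together with our particular choice of partial isomorphisms to shorten the associated good path from length $3$ to length $2$.

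First I would note that any $s\in A_{cut}^+$ yields a tuple $s(\mathbf{v})$ lying in the interpretation of the relation symbol $A$ in $T(\mathbb{M}_{cut}^+)=T(\mathbb{M}_{cut})^+$, so $s(\mathbf{v})$ is a live tuple of the Herwig extension (and even if $|V|=1$ the ``$r=1$'' alternative in condition (ii) applies). Condition (ii) then gives an automorphism $f\in\langle\widehat{p_1},\dots,\widehat{p_k}\rangle$ with $f(s(v_i))\in T(\mathbb{M}_{cut})$ for every $i$. Since $f$ preserves $A$ we have $(f\circ s)(\mathbf{v})=f(s(\mathbf{v}))\in I(A)$, and as all of its components lie in the substructure $T(\mathbb{M}_{cut})$ --- where $A$ is interpreted as the restriction of its interpretation in the extension --- we obtain $(f\circ s)(\mathbf{v})\in I(A_{cut})=\{v_\pi(\mathbf{v})\;|\;v_\pi\in A_{cut}\}$. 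Because $\mathbf{v}$ enumerates all of $V$, this forces $f\circ s=v_\pi$ for some good path $\pi$ with $v_\pi\in A_{cut}$, i.e.\ $lh(\pi)\leq 3$.

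It then remains to treat the case $lh(\pi)=3$. Writing $last(\pi)=\Delta$, recall that we deliberately placed the partial isomorphism $p_\pi\colon v_\pi[V]\to v_{\pi_\Delta}[V]$ with $p_\pi\circ v_\pi=v_{\pi_\Delta}$ and $lh(\pi_\Delta)=2$ into our chosen list $\{p_1,\dots,p_k\}$. By condition (i) it extends to an automorphism $\widehat{p_\pi}\in\langle\widehat{p_1},\dots,\widehat{p_k}\rangle$, and since $v_\pi$ maps $V$ into $dom(p_\pi)=v_\pi[V]$ we get $\widehat{p_\pi}\circ v_\pi=p_\pi\circ v_\pi=v_{\pi_\Delta}$. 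Setting $g:=\widehat{p_\pi}\circ f$, which is again in the group, yields $g\circ s=v_{\pi_\Delta}$ with $lh(\pi_\Delta)=2$, which is what we want.

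The main point to get right is the interplay between live tuples, substructures and automorphisms in the first two steps: one must check that $s(\mathbf{v})$ really counts as live (or falls under the $r=1$ case) so that condition (ii) applies, and that a tuple in the interpretation of $A$ all of whose components lie in the base structure is automatically an $A_{cut}$-tuple, which uses that $T(\mathbb{M}_{cut})$ sits inside the Herwig extension as an \emph{induced} substructure. The subsequent length-reduction step is then essentially bookkeeping, resting on the fact that the $p_\pi$ were defined precisely so as to collapse a length-$3$ path onto the canonical length-$2$ path realizing the same type --- which is exactly why the cut was made at length $3$.
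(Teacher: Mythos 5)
Your proposal is correct and follows essentially the same route as the paper: use the fact that $s(\mathbf{v})\in I(A)$ is live to invoke Herwig's condition (ii), identify the resulting tuple in the base structure with some $v_\pi(\mathbf{v})$ for $v_\pi\in A_{cut}$, and then compose with $\widehat{p_\pi}$ to reduce a length-$3$ path to the canonical length-$2$ path $\pi_\Delta$. Your additional care about why $\widehat{p_\pi}\circ v_\pi=p_\pi\circ v_\pi$ and why the image tuple must belong to $I(A_{cut})$ only makes explicit what the paper leaves implicit.
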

Next, we generalise the notion of 'underlying type' (i.e. $last(\pi)$ for a path assignment $v_{\pi}$) to all assignments in $A_{cut}^+$. We define a function $type(\;):A_{cut}^+\to\{\Delta\subseteq\Phi\;|\;\Delta\;\textrm{is a}\;\Phi\textrm{-type}\}$. Set $type(v_{\pi}):=last(\pi)$ for all $v_{\pi}\in A_{cut}\subset A_{cut}^+$. For $s\in A_{cut}^+\setminus A_{cut}$, by the level 2 lemma we know there is $f\in\langle\widehat{p_1},...,\widehat{p_k}\rangle$ such that $fs=v_{\pi}\in A_{cut}$, and we set $type(s):=last(\pi)$.

\begin{proof}{\textbf{Well-definedness of} $type(\;)$}\\ 
Let $f,g\in\langle\widehat{p_1},...,\widehat{p_k}\rangle$ be automorphisms with $fs=V_{\pi}\in A_{cut}$ and $gs=v_{\pi'}\in A_{cut}$. Observe that $f\circ g^{-1}$ is an automorphism in the subgroup $\langle\widehat{p_1},...,\widehat{p_k}\rangle$ that maps elements in $M_{cut}$ to each other, as $fg^{-1}\circ v_{\pi'}=v_{\pi}$. Hence by (iii) there must be a unique $p\in\langle p_1,...,p_k\rangle$ such that $\widehat{p}=fg^{-1}$ and thus $pv_{\pi'}=v_{\pi}$. Lemma 4.1 tells us that there are assignments $v_{\rho},v_{\rho'}\in A_{cut}$ with $v_{\pi}=_V v_{\rho}$, $v_{\pi'}=_V v_{\rho'}$ and $last(\rho)=last(\rho')$. It is an easy consequence of the Truth Lemma (lemma 2.1) and Locality that $v_{\pi}=_Vv_{\rho}$ implies that $last(\pi)=last(\rho)$ and similarly for $\pi',\rho'$.\footnote{For suppose that $v_{\pi}=_Vv_{\rho}$. Observe that $D^{v_{\pi_0}}_V=V$ for any path assignment with $dom(v_{\pi_0})=V$. By Locality the hypothesis gives that $\{\xi\;|\;v_{\pi}\models\xi\;\&\;Free(\xi)\subseteq V\}=\{\xi\;|\;v_{\rho}\models\xi\;\&\;Free(\xi)\subseteq V\}$. By the Truth Lemma, this in turn implies that $\{\xi\;|\;\xi\in last(\pi)\;\&\;Free(\xi)\subseteq V\}=\{\xi\;|\;\xi\in last(\rho)\;\&\;Free(\xi)\subseteq V\}$ which says that $last(\pi)\sim_V last(\rho)$, but this clearly implies that $last(\pi)=last(\rho)$.} Hence $last(\pi)=last(\rho)=last(\rho')=last(\pi')$.
\end{proof}
This last fact used, i.e. that $v_{\pi}=_Xv_{\pi'}$ implies $last(\pi)\sim_X last(\pi')$, we will now generalise to all assignments in $s,t\in A_{cut}^+$ w.r.t their 'underlying types' $type(s),type(t)$.
\begin{lemma}{\textbf{Type Lemma}}\\
If $s,t\in A_{cut}^+$ with $s=_Xt$, then $type(s)\sim_X type(t)$.
\end{lemma}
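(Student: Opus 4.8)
The plan is to reduce the general case to the already-established fact about path assignments, i.e. that $v_\pi =_X v_{\pi'}$ implies $last(\pi)\sim_X last(\pi')$, by transporting $s$ and $t$ simultaneously into $A_{cut}$ via a single automorphism. First I would invoke the Level 2 Lemma to get $f\in\langle\widehat{p_1},\dots,\widehat{p_k}\rangle$ with $fs = v_\pi \in A_{cut}$ of length $lh(\pi)\le 2$, and then apply it again to $ft$ to get $g\in\langle\widehat{p_1},\dots,\widehat{p_k}\rangle$ with $g(ft) = v_{\pi'}\in A_{cut}$; note $type(s)=last(\pi)$ and $type(t)=last(\pi')$ by definition of $type(\;)$ together with its well-definedness. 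Setting $h:= g\circ f$, which is again in the subgroup, we have $hs = gv_\pi$ and $ht = v_{\pi'}$, and since $h$ is an automorphism (in particular injective and agreement-preserving on the copied variables encoding assignments) the hypothesis $s=_X t$ gives $hs =_X ht$, i.e. $g v_\pi =_X v_{\pi'}$.

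The point now is that $g v_\pi$ need not itself be a path assignment in $A_{cut}$, so I would apply Lemma 4.1 to the partial isomorphism realized by $g$ restricted to $M_{cut}$: since $g$ maps the $M_{cut}$-elements $v_\pi[V]$ into $M_{cut}$ (they land on $v_{\pi'}[V]$), condition (iii) of Herwig's theorem yields a unique $p\in\langle p_1,\dots,p_k\rangle$ with $\widehat p = g$ on those elements and $p v_\pi = v_{\pi'}$. Then Lemma 4.1, applied with this $p$ and the agreement $p v_\pi =_X v_{\pi'}$, produces $v_\rho, v_{\rho'}\in A_{cut}$ with $last(\rho)=last(\rho')$, $v_\rho =_X v_\pi$ and $v_{\rho'} =_X v_{\pi'}$. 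Using the Truth Lemma and Locality exactly as in the well-definedness argument (the footnoted fact), $v_\rho =_X v_\pi$ gives $last(\rho)\sim_X last(\pi)$ and $v_{\rho'}=_X v_{\pi'}$ gives $last(\rho')\sim_X last(\pi')$. Chaining these: $type(s)=last(\pi)\sim_X last(\rho)=last(\rho')\sim_X last(\pi')=type(t)$, using that $\sim_X$ is symmetric and transitive and that $last(\rho)=last(\rho')$ makes the middle link trivial.

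The main obstacle I expect is the bookkeeping around condition (iii): one must be careful that $g$ genuinely restricts to a partial isomorphism in $\langle p_1,\dots,p_k\rangle$ on the relevant finite set of $M_{cut}$-elements, rather than merely being an automorphism of the extension — this is where the hypothesis that $s,t$ (hence after transport $v_\pi$, $v_{\pi'}$) have domains/ranges inside $M_{cut}$ is used, and where the bijectivity of $\widehat{(\;)}$ established after Herwig's theorem is needed to extract the unique $p$. A secondary subtlety is checking that $\sim_X$ is in fact transitive on types (which follows from condition (e) in the definition of $\Psi$-type guaranteeing $D^{\Sigma}_X = D^{\Delta}_X$ agrees along the chain), so that the final chaining step is legitimate; this is routine but worth stating explicitly.
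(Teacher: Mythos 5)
Your proposal follows essentially the same route as the paper's own proof: two applications of the Level 2 Lemma to transport $s$ and $t$ into $A_{cut}$, condition (iii) of Herwig's theorem to extract the unique $p\in\langle p_1,\dots,p_k\rangle$ with $\widehat{p}=g$ and $pv_{\pi}=_X v_{\pi'}$, then Lemma 4.1 followed by the Truth Lemma and Locality to chain $last(\pi)\sim_X last(\rho)=last(\rho')\sim_X last(\pi')$. The argument is correct (your remark that condition (iii) only needs the agreement on the elements $v_{\pi}(x)$ for $x\in X$, rather than all of $v_{\pi}[V]$, is exactly the precision the paper uses), so there is nothing further to add.
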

\begin{proof}
Let $s,t\in A_{cut}^+$ with $s=_Xt$. By the level 2 lemma, there is $f\in\langle\widehat{p_1},...,\widehat{p_k}\rangle$ such that $fs=v_{\pi}\in A_{cut}$ with $lh(\pi)\leq 2$, so $type(s)=last(\pi)$. As $f$ is an isomorphism on $T(\mathbb{M}_{cut}^+)$, we know that $ft\in A_{cut}^+$ is an assignment as well, with $fs=v_{\pi}=_Xft$. By applying the level 2 lemma again to $ft$, we get a $g\in\langle\widehat{p_1},...,\widehat{p_k}\rangle$ such that $gft=v_{\pi'}\in A_{cut}$ with $lh(\pi')\leq 2$, so $type(t)=last(\pi')$. Again, we know that $gv_{\pi}\in A_{cut}^+$ is also an assignment (though in general not one in $A_{cut}$) such that $gv_{\pi}=gfs=_Xgft=v_{\pi'}$. But observe that the automorphism $g$ maps $v_{\pi}(x)\mapsto v_{\pi'}(x)$ for all $x\in X$, hence by condition (iii) there must be a unique $p\in\langle p_1,...,p_k\rangle$ such that $\widehat{p}=g$ and so $pv_{\pi}=_Xv_{\pi'}$. By Lemma 4.1, there are $v_{\rho},v_{\rho'}\in A_{cut}$ such that $v_{\pi}=_Xv_{\rho}$, $v_{\pi'}=_Xv_{\rho'}$ and $last(\rho)=last(\rho')$. Invoking the Truth Lemma and Locality as before this implies that $last(\pi)\sim_Xlast(\rho)$ and $last(\pi')\sim_X last(\rho')$. Concatenating these facts we see that \[type(s)=last(\pi)\sim_X last(\rho)=last(\rho')\sim_X last(\pi')=type(t)\]
\end{proof}
\begin{lemma}{\textbf{Encoding Lemma}}\\
For all $s\in A_{cut}^+$ and all $R^{X,y}\in\tau^+:\quad s\models R^{X,y}\mathbf{x}\leftrightarrow D_Xy$
\end{lemma}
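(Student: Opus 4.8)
The plan is to use the automorphism group $\langle\widehat{p_1},\dots,\widehat{p_k}\rangle$ supplied by Herwig's theorem to reduce the biconditional, for an arbitrary $s\in A_{cut}^+$, to the case of path assignments of length at most $2$, where it follows from Proposition 4.1 (for the direction $\leftarrow$) and the Truth Lemma of the tree model $\mathbb{M}$ (for $\rightarrow$). The first observation is that, at assignments of $A_{cut}^+$, both $R^{X,y}\mathbf{x}$ and $D_Xy$ are invariant under every $f\in\langle\widehat{p_1},\dots,\widehat{p_k}\rangle$: such an $f$ is an automorphism of $T(\mathbb{M}_{cut}^+)$, so it preserves satisfaction of the relational atom $R^{X,y}\mathbf{x}$ directly, and since it permutes $A_{cut}^+$ and preserves every agreement relation $=_Z$, it preserves satisfaction of $D_Xy$ as well. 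By the Level 2 Lemma, every $s\in A_{cut}^+$ admits such an $f$ with $f\circ s=v_\pi\in A_{cut}$, $lh(\pi)\leq 2$, so it is enough to prove $\mathbb{M}_{cut}^+,v_\pi\models R^{X,y}\mathbf{x}\leftrightarrow D_Xy$ for path assignments $v_\pi$ of length $\leq 2$.

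For $D_Xy\rightarrow R^{X,y}\mathbf{x}$: since $A_{cut}\subseteq A_{cut}^+$ and $D_Xy$ is universal over the team, truth of $D_Xy$ at $v_\pi$ in $\mathbb{M}_{cut}^+$ already implies it in $\mathbb{M}_{cut}$, so Proposition 4.1 gives $\mathbb{M}_{cut},v_\pi\models R^{X,y}\mathbf{x}$, i.e.\ $D_Xy\in last(\pi)$; as the Herwig extension is a first-order extension and $v_\pi(\mathbf{x})$ lies in $M_{cut}$, this passes up to $\mathbb{M}_{cut}^+,v_\pi\models R^{X,y}\mathbf{x}$. For the converse, $\mathbb{M}_{cut}^+,v_\pi\models R^{X,y}\mathbf{x}$ forces, again because $v_\pi(\mathbf{x})$ lies in $M_{cut}$, that $D_Xy\in last(\pi)$, hence $\mathbb{M},v_\pi\models D_Xy$ by the Truth Lemma. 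To deduce $\mathbb{M}_{cut}^+,v_\pi\models D_Xy$ I fix any $t\in A_{cut}^+$ with $v_\pi=_Xt$ and must show $v_\pi=_yt$. Applying the Level 2 Lemma to $t$ gives $g\in\langle\widehat{p_1},\dots,\widehat{p_k}\rangle$ with $g\circ t=v_{\pi'}\in A_{cut}$, $lh(\pi')\leq 2$, and then $g\circ v_\pi=_Xv_{\pi'}$. If $g=\mathrm{id}$ then $t=v_{\pi'}$ is a path assignment in $A$, and $\mathbb{M},v_\pi\models D_Xy$ with $v_\pi=_Xt$ yields $v_\pi=_yt$ at once. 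Otherwise $g$ maps the $M_{cut}$-elements $v_\pi(x)$ to the $M_{cut}$-elements $v_{\pi'}(x)$ for $x\in X$, so Herwig's clause (iii) — exactly as in the proof of the Type Lemma — produces a unique $p\in\langle p_1,\dots,p_k\rangle$ with $\widehat p=g$ and $pv_\pi=_Xv_{\pi'}$.

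Lemma 4.1 then supplies genuine path assignments $v_\rho,v_{\rho'}\in A_{cut}$ with $last(\rho)=last(\rho')$, $v_\rho=_Xv_\pi$, $v_{\rho'}=_Xv_{\pi'}$ and $pv_\rho=v_{\rho'}$, and the value at $y$ can be tracked explicitly. From $\mathbb{M},v_\pi\models D_Xy$ and $v_\pi=_Xv_\rho$ we get $v_\pi(y)=v_\rho(y)$; the standard consequence of the Truth Lemma and Locality — that $v_\rho=_Xv_\pi$ implies $last(\rho)\sim_X last(\pi)$ — together with $D_Xy\in last(\pi)$ and $X=Free(D_Xy)\subseteq D^{last(\pi)}_X$ gives $D_Xy\in last(\rho)=last(\rho')$, whence $\mathbb{M},v_{\rho'}\models D_Xy$ and hence, via $v_{\rho'}=_Xv_{\pi'}$, $v_{\rho'}(y)=v_{\pi'}(y)$; and $pv_\rho=v_{\rho'}$ gives $p(v_\rho(y))=v_{\rho'}(y)$. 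Since $\widehat p=g$ extends $p$, chaining these equalities yields $g(v_\pi(y))=g(v_\rho(y))=p(v_\rho(y))=v_{\rho'}(y)=v_{\pi'}(y)=g(t(y))$, so $v_\pi(y)=t(y)$ by injectivity of $g$. The case $X=\emptyset$ (where $R^{X,y}$ is nullary) lies outside this clause-(iii) step and I would dispatch it separately, using that the then necessarily constant $y$-value of the tree model lies in the domain of, and is fixed by, every $p_i$, hence by the whole group.

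I expect the crux to be the converse direction and, within it, the passage from an arbitrary assignment $t$ of the enlarged team $A_{cut}^+$ back to honest path assignments: the dependence-atom semantics is only available on the tree model $\mathbb{M}$, so one has to move $t$ (and $v_\pi$ with it) along an automorphism into $A_{cut}$, recognise that automorphism as a concrete composite partial isomorphism via clause (iii), and then use Lemma 4.1 to replace everything by path assignments with matching last types — keeping the value at $y$ correctly aligned through these moves is the delicate part. The remaining ingredients (invariance under the group, the $\leftarrow$ direction, and the reductions) are routine given the lemmas already in place.
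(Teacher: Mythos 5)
Your proof is correct and follows the same overall skeleton as the paper's: reduce to path assignments of length at most $2$ via the Level 2 Lemma, use Proposition 4.1 for the direction from $D_Xy$ to $R^{X,y}\mathbf{x}$, and for the converse pass from an arbitrary $t\in A_{cut}^+$ back into $A_{cut}$ via an automorphism, identify that automorphism as $\widehat{p}$ for a concrete $p\in\langle p_1,\dots,p_k\rangle$ by Herwig's clause (iii), and invoke Lemma 4.1 to obtain path assignments $v_\rho,v_{\rho'}$ with $last(\rho)=last(\rho')$. Within that skeleton you make two genuinely more elementary substitutions. First, where the paper pushes $\neg R^{X,y}\mathbf{x}\to tr(\neg D_Xy)$ up to the extension by observing its translation is existential (Łoś--Tarski), you simply note that $D_Xy$ is a universal statement over the team and so restricts from $A_{cut}^+$ to the subteam $A_{cut}$, while atomic facts about tuples from $M_{cut}$ persist to the extension; this is the same content without the preservation theorem. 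Second, where the paper appeals to fact 4.9 of Baltag--van Benthem and to clause (2) of the path-assignment construction to conclude $v_\pi=_yv_\rho$ and $v_{\pi'}=_yv_{\rho'}$, you instead use the Truth Lemma to get $\mathbb{M},v_\pi\models D_Xy$ and $\mathbb{M},v_{\rho'}\models D_Xy$ and then apply the semantics of the dependence atom in the tree model directly, tracking the $y$-value through the chain $g(v_\pi(y))=p(v_\rho(y))=v_{\rho'}(y)=g(t(y))$ and cancelling $g$ by injectivity; this is equivalent to the paper's computation $ft=g^{-1}v_{\pi'}=_y v_\rho$ but avoids the external fact. You also explicitly isolate the $X=\emptyset$ case, where clause (iii) has no witnesses $a,b\in C$ to act on, and your sketch for it (the constant $y$-value is fixed by every generator, hence by the whole group) is sound; the paper's proof silently skips this case. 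Both routes are valid; yours leans a little more on the semantics of $\mathbb{M}$ and a little less on the combinatorics of the unravelling.
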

\begin{proof}
($\leftarrow$) By the level 2 lemma, there is $f\in\langle\widehat{p_1},...,\widehat{p_k}\rangle$ such that $fs=v_{\pi}\in A_{cut}$ with $lh(\pi)\leq 2$. Applying the first-order translation to proposition 3.1 we get that $T(\mathbb{M}_{cut}),v_{\pi}\models tr(\neg R^{X,y}\mathbf{x}\to\neg D_Xy)$. But observe that 
\begin{align*}
    tr(\neg R^{X,y}\mathbf{x}\to\neg D_Xy)\;=\;\neg R^{X,y}\mathbf{x}\to tr(\neg D_Xy)\;\equiv\;& R^{X,y}\mathbf{x}\vee\exists\mathbf{z},\mathbf{z'}(A\mathbf{v}\wedge A\mathbf{v'}[\mathbf{z}/\mathbf{z'}]\wedge y\ne y')\\
    \;\equiv\;&\exists\mathbf{z},\mathbf{z'}(R^{X,y}\mathbf{x}\vee(A\mathbf{v}\wedge A\mathbf{v'}[\mathbf{z}/\mathbf{z'}]\wedge y\ne y'))
\end{align*}
is an \textit{existential} first-order formula. Hence by the dualized version of the Łoś-Tarski theorem, this still holds in the Herwig \textit{extension}, i.e. $T(\mathbb{M}_{cut}^+),v_{\pi}\models\neg R^{X,y}\mathbf{x}\to tr(\neg D_Xy)$. As $f$ is an isomorphism on $T(\mathbb{M}_{cut}^+)$ and $fs=v_{\pi}$, we get that $T(\mathbb{M}_{cut}^+),s\models\neg R^{X,y}\mathbf{x}\to tr(\neg D_Xy)$, as desired.

($\to$) Suppose that $s\models R^{X,y}\mathbf{x}$, and let $s=_Xt$ for some $t\in A_{cut}^+$. The former fact implies that $D_Xy\in type(s)$ and the latter by the Type Lemma implies that $type(s)\sim_Xtype(t)$. It follows that $D_Xy\in type(t)$ as well. Applying the level 2 lemma two times successively as before, we obtain automorphism $f,g\in\langle\widehat{p_1},...,\widehat{p_k}\rangle$ such that $fs=v_{\pi}\in A_{cut}^+$, $gft=v_{\pi'}\in A_{cut}^+$ with $V^{s,t}=V^{v_{\pi},ft}=V^{gv_{\pi},v_{\pi'}}$ (recall the notation $V^{a,b}=\{v\in V\;|\;a=_vb\}$). As in the type lemma, we see that $g:v_{\pi}(x)\mapsto v_{\pi'}(x)$ (i.e. $gv_{\pi}=_Xv_{\pi'}$) for all $x\in X$ and thus by condition (iii) there is a unique $p\in\langle p_1,...,p_k\rangle$ such that $\widehat{p}=g$ and hence $pv_{\pi}=_Xv_{\pi'}$.

We know by Lemma 4.1 that there must be $v_{\rho},v_{\rho'}\in A_{cut}$ with $last(\rho)=last(\rho')$ such that $v_{\pi}=_Xv_{\rho}$, $v_{\pi'}=_Xv_{\rho'}$ and $pv_{\rho}=v_{\rho'}$. By fact 4.9 from \cite{BaltagvBenthemLFD}, this implies that there is a path $last(\pi)\sim_X....\sim_Xlast(\rho)$ and similarly for $\pi',\rho'$. We saw that $D_Xy\in type(s)\cap type(t)=last(\pi)\cap last(\pi')$, so in fact $D_Xy$ must be in all the types along these paths. But then it follows from condition (2) of the recursive definition of path assignments (in the proof of theorem 2.1) that $v_{\pi}=_yv_{\rho}$ and $v_{\pi'}=_yv_{\rho'}$. But recall that $pv_{\rho}=v_{\rho'}$ so:
\[ft=g^{-1}gft=g^{-1}v_{\pi'}=\widehat{p^{-1}}v_{\pi'}=_y\widehat{p^{-1}}v_{\rho'}=v_{\rho}\]
But then $v_{\pi}=_yv_{\rho}=_yft$ so by transitivity $y\in V^{v_{\pi},ft}=V^{s,t}$ and we conclude that $s=_yt$.
\end{proof}

\begin{theorem} The dependence models $\mathbb{M}$ and $\mathbb{M}_{cut}^+$ are dependence-bisimilar.
\end{theorem}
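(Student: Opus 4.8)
The plan is to exhibit the explicit relation
\[Z:=\{(v_{\pi},s)\in A\times A_{cut}^+\;|\;last(\pi)=type(s)\}\]
and verify that it satisfies (\textbf{Atom}), (\textbf{Forth}) and (\textbf{Back}). First I would record that $Z$ is total on both sides: every $\Phi$-type $\Delta\in\mathfrak{M}$ is realised in $A$ by the good path $\pi_{\Delta}=\langle\Sigma_0,\emptyset,\Delta\rangle$ (legitimate since $\Sigma_0\sim_{\emptyset}\Delta$) and in $A_{cut}^+$ by $v_{\pi_{\Delta}}\in A_{cut}$, while $last(\pi)$ and $type(s)$ always lie in $\mathfrak{M}$. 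The workhorse is a ``type-membership'' reading of atoms on the finite side: for every $s\in A_{cut}^+$ one has $s\models P\mathbf{x}$ iff $P\mathbf{x}\in type(s)$, and $s\models D_Xy$ iff $D_Xy\in type(s)$. The first follows from the Truth Lemma on $A_{cut}$, the fact that the Herwig extension only \emph{extends} $M_{cut}$, and invariance of atoms under the automorphisms in $\langle\widehat{p_1},...,\widehat{p_k}\rangle$ (via the Level~2 Lemma); the second is the Encoding Lemma together with the definition of $I^{M_{cut}}(R^{X,y})$, again transported along an automorphism. The Truth Lemma on $\mathbb{M}$ gives the same reading on the left, so for $(v_{\pi},s)\in Z$ the equivalence $s\models P\mathbf{x}$ iff $v_{\pi}\models P\mathbf{x}$ is immediate, i.e.\ (\textbf{Atom}) holds.

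For the dependence-closedness requirements I would isolate one elementary observation: if $a,b$ are assignments in a common team and $W=V^{a,b}$ is \emph{exactly} their agreement set, then $a\models D_Wz$ forces $z\in W$ (apply the truth condition for $D_Wz$ to $b$). Applying this on the appropriate side (to $v_{\pi},v_{\rho}$ in $A$ with $W=V^{v_{\pi},v_{\rho}}$, resp.\ to $s,t$ in $A_{cut}^+$ with $W=V^{s,t}$) and combining with the type-membership reading yields $D^{\Gamma}_W=W$, where $\Gamma=last(\pi)=type(s)$ is the common type; since the $D$-atoms at both $s$ and $v_{\pi}$ are read off $\Gamma$, this shows $W$ is dependence-closed at both assignments, covering clause~(i) of (\textbf{Forth}) and of (\textbf{Back}). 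A by-product, via the Truth Lemma and Locality on $\mathbb{M}$ (resp.\ the Type Lemma on $A_{cut}^+$), is that $v_{\pi}=_X v_{\rho}$ implies $last(\pi)\sim_X last(\rho)$ and that $s=_Y t$ implies $type(s)\sim_Y type(t)$ -- precisely the hypotheses needed to extend good paths.

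The (\textbf{Back}) witness is the easy half. Given $(v_{\pi},s)\in Z$ and $t\in A_{cut}^+$, put $Y:=V^{s,t}$ and $\Gamma:=type(t)$. By the Type Lemma $last(\pi)=type(s)\sim_Y\Gamma$, so $\sigma:=(\pi,Y,\Gamma)$ is a good path with $last(\sigma)=\Gamma$; since $D^{last(\pi)}_Y=Y$, the second defining clause for path assignments makes $v_{\sigma}$ copy exactly the $Y$-values of $v_{\pi}$, so $v_{\pi}=_Y v_{\sigma}$ and $(v_{\sigma},t)\in Z$ with $v_{\sigma}\in A$ (no length bound is needed here).

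The (\textbf{Forth}) witness is where the finiteness of $A_{cut}^+$ must be honoured, and I expect this to be the crux. Given $(v_{\pi},s)\in Z$ and $v_{\rho}\in A$, set $X:=V^{v_{\pi},v_{\rho}}$. Extending $\pi$ directly would leave the infinite model, so I would first invoke the Level~2 Lemma to obtain $f\in\langle\widehat{p_1},...,\widehat{p_k}\rangle$ with $fs=v_{\pi'}\in A_{cut}$ and $lh(\pi')\leq 2$; invariance of $type(\;)$ under $\langle\widehat{p_1},...,\widehat{p_k}\rangle$ gives $last(\pi')=type(s)=last(\pi)$. Since $v_{\pi}=_X v_{\rho}$ we get $last(\pi')\sim_X last(\rho)$, so $\sigma:=(\pi',X,last(\rho))$ is a good path, and crucially $lh(\sigma)=lh(\pi')+1\leq 3$, whence $v_{\sigma}\in A_{cut}\subseteq A_{cut}^+$; as $D^{last(\pi')}_X=X$ we again have $v_{\pi'}=_X v_{\sigma}$ and $last(\sigma)=last(\rho)$. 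Finally I would transport back along $f$: $t:=f^{-1}v_{\sigma}$ lies in $A_{cut}^+$, $s=f^{-1}v_{\pi'}=_X f^{-1}v_{\sigma}=t$, and $type(t)=type(v_{\sigma})=last(\rho)$, so $(v_{\rho},t)\in Z$. The delicate bookkeeping is concentrated in this last step -- checking that the Level~2 reduction brings $s$ down to a path assignment short enough that a single extension stays inside $A_{cut}$, and that conjugation by $f$ preserves the relevant $=_X$-agreements and $type(\;)$-values -- while the remaining verifications are routine once the Encoding, Type and Truth Lemmas are in hand. Totality of $Z$ and Proposition~3.1 then transfer truth of $\varphi$ from the root of $\mathbb{M}$ to the finite model $\mathbb{M}_{cut}^+$, yielding the FMP.
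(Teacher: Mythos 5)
Your proposal is correct and follows essentially the same route as the paper: the same relation $Z$ (with the pair order reversed), the same type-membership reading of atoms via the Truth, Encoding and Level~2 Lemmas, the same direct path-extension for the challenge coming from the finite side, and the same crucial manoeuvre for the challenge coming from the tree side (pull $s$ down to a length-$\leq 2$ path assignment by an automorphism, extend once inside $A_{cut}$, transport back along $f^{-1}$). The only cosmetic difference is that you verify the dependence-closedness clause of Definition~3.1 directly, whereas the paper verifies the stronger atom-matching for $D_Xy$ and appeals to the noted equivalence with the bisimulation of \cite{local_deps}.
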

\begin{proof}
We show that the relation $Z\subseteq A_{cut}^+\times A$ defined by $Z:=\{(s,v_{\pi})\;|\;type(s)=last(\pi)\}$ is an LFD-bisimulation in the sense of $\cite{local_deps}$ and hence, by our remark above, also a dependence bisimulation. Pick an arbitrary pair $(s,v_{\pi})\in Z$. By the level 2 lemma, there is some $f\in\langle\widehat{p_1},...,\widehat{p_k}\rangle$ such that $fs=v_{\pi'}\in A_{cut}$ with $lh(\pi')\leq 2$, hence $type(s)=v_{\pi'}$. As $type(\;)$ is well-defined, it follows that $last(\pi)=last(\pi')$. We show that the pair $(s,v_{\pi})$ satisfies (Atom) (i.e. the one which also ranges over dependence atoms \cite{local_deps}) and is closed under the (Back) \& (Forth) clauses (without the dependence-closedness condition).

$\textbf{(Atom)}$ Observe that the chain of equivalences:

\[s\models_{\mathbb{M}_{cut}^+} P\mathbf{x}\quad \textrm{iff}\quad v_{\pi'}\models_{\mathbb{M}_{cut}^+} P\mathbf{x}\quad \textrm{iff}\quad P\mathbf{x}\in last(\pi')=last(\pi)\quad \textrm{iff}\quad v_{\pi}\models_{\mathbb{M}} P\mathbf{x}\]

holds for every $P\in\tau^+$ (i.e. including the relations $R^{X,y}$!) by the fact that $f$ is an isomorphism with $fs=v_{\pi'}$ and the way we have specified the interpretation $I(P)$ on both models. Invoking the encoding lemma, this implies that $\mathbb{M}_{cut}^+,s\models D_Xy$ iff $\mathbb{M},v_{\pi}\models D_Xy$.

\textbf{(Forth)} Let $t\in A_{cut}^+$ be some assignment in the Herwig extension, and let $V^{s,t}$ be the maximal set of variables on which $s$ and $t$ agree. By the Type Lemma $type(s)\sim_{V^{s,t}} type(t)$. But $last(\pi)=last(\pi')=type(s)$, so it follows that $\pi^+:=(\pi,V^{s,t},type(t))$ is a good path. Clearly $v_{\pi^+}\in A$ with $v_{\pi}=_{V^{s,t}} v_{\pi^+}$, and lastly $(t,v_{\pi^+})\in Z$ as $type(t)=last(\pi^+)$.

\textbf{(Back)} Let $v_{\pi''}\in A$, with $V^{\pi,\pi''}=\{v\in V\;|\;v_{\pi}=_vv_{\pi''}\}$ the maximal
set ot variables on which $v_{\pi},v_{\pi''}$ agree. By a now familiar argument involving the Truth lemma and Locality (i.e. the analogue of the Type Lemma for $\mathbb{M}$), it follows that $last(\pi)\sim_{V^{s,t}}last(\pi'')$. As $type(s)=last(\pi')=last(\pi)$, we see that $\pi'_+:=(\pi',V^{\pi,\pi''},last(\pi''))$ is a good path. Moreover, we know that $lh(\pi')\leq 2$ which implies that $lh(\pi'_+)=lh(\pi')+1\leq 2+1=3$ and so $v_{\pi'_+}\in A_{cut}$ is in the cut-off model. Clearly $v_{\pi'}=_{V^{\pi,\pi''}}v_{\pi'_+}$. Set $t:=f^{-1}v_{\pi'_+}$, then $s=_{V^{s,t}}t$ as $fs=v_{\pi'}$ and moreover $(t,v_{\pi''})\in Z$ since $type(t)=last(\pi'^+)=last(\pi'')$.
\end{proof}

\begin{corollary}{\textbf{Bounded Model Property}}\\
Every satisfiable $\varphi$ in LFD has a finite model whose size is bounded by a computable function of $\varphi$. \footnote{Any formula $\varphi$ determines a unique smallest \textit{finite} vocabulary $(V,\tau)$ such that $Cl(\varphi)$ belongs to $LFD[V,\tau]$; the computable function takes as input $|V|$, the maximal arity $r$ of relations in $\tau$, and the number of distinct $\Phi$-types $m$.}
\end{corollary}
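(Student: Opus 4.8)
The plan is to check that the finite dependence model $\mathbb{M}_{cut}^+$ constructed above already does the job, and then to go back through the construction keeping track of cardinalities. By the Tree Model Property there is a tree-like dependence model $\mathbb{M}=(M,A)$ satisfying $\varphi$ at its root assignment $r=v_{\pi_0}$, and $lh(\pi_0)\leq 2\leq 3$, so $r\in A_{cut}\subseteq A_{cut}^+$ and $type(r)=last(\pi_0)$. Hence the pair $(r,r)$ belongs to the relation $Z=\{(s,v_{\pi})\mid type(s)=last(\pi)\}$, which by the preceding theorem is a dependence bisimulation between $\mathbb{M}_{cut}^+$ and $\mathbb{M}$. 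Since LFD-formulas are invariant under dependence bisimulations (Proposition~3.1) and $\mathbb{M},r\models\varphi$, we conclude $\mathbb{M}_{cut}^+,r\models\varphi$; thus $\mathbb{M}_{cut}^+$ is a finite model of $\varphi$.

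For the size bound I would count as follows. The tree $T$ of good paths has branching degree at most $2^{k}m$, one successor per pair $(X,\Sigma)$ with $X\subseteq V$ and $\Sigma$ a $\Phi$-type, so the number of good paths of length $\leq 3$ — and hence $|A_{cut}|$ — is at most $1+2^{k}m+(2^{k}m)^2$, while $|M_{cut}|\leq k\cdot|A_{cut}|$ since every path assignment has range of size $\leq k$. The set of partial isomorphisms $\{p_{\pi}\mid lh(\pi)=3\}$ has at most $(2^{k}m)^2$ members, each of domain size $\leq k$, and the expanded relational signature $\tau^+\cup\{A\}$ has at most $|\tau|+k\cdot 2^{k}+1$ symbols, of maximal arity at most $\max(r,k)$. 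Every one of these quantities is computable from $|V|=k$, the maximal arity $r$ of $\tau$, and the number $m$ of $\Phi$-types (and $m$ is itself computable, since $\Phi=Cl(\{\varphi\})$ is finite and effectively obtained from $\varphi$).

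It remains to bound $|M_{cut}^+|$, and this is the one place where more than routine bookkeeping is needed: I would invoke an \emph{effective} version of Herwig's theorem, namely that for a finite relational signature the finite extension $C^+$ can be chosen with $|C^+|$ bounded by a computable function of $|C|$, of the number of given partial isomorphisms, and of the signature. The known proofs of the theorem do provide such bounds. Feeding in the estimates of the previous paragraph yields a computable bound on $|M_{cut}^+|$, and therefore on $|A_{cut}^+|=|I^{M_{cut}^+}(A)|\leq|M_{cut}^+|^{\,k}$, so that the size of $\mathbb{M}_{cut}^+$ is bounded by a computable function of $\varphi$ as claimed. Without an explicit computable bound in Herwig's theorem one would still obtain the bare finite model property but not the stated bounded model property, so this effective refinement is exactly the crux.
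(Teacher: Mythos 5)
Your proposal is correct and follows essentially the same route as the paper: reuse the Herwig extension $\mathbb{M}_{cut}^+$, transfer satisfaction of $\varphi$ along the bisimulation $Z$ via Proposition~3.1, and bound the sizes of the cut-off model and of the Herwig extension by computable functions of $k$, $r$ and $m$. The only differences are cosmetic --- you exhibit the witness pair $(r,r)\in Z$ directly where the paper appeals to totality of the bisimulation, and where you ask for an ``effective version'' of Herwig's theorem the paper cites the explicit bound $|M_{cut}^+|\leq itexp(2r-1,p(|M_{cut}|))$ from Herwig's paper (your extra care in noting that the relevant signature is $\tau^+\cup\{A\}$, of arity up to $\max(r,k)$, is a fair point the paper glosses over).
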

\begin{proof}
Let $\varphi$ be a satisfiable LFD-formula with closure $\Phi$ in the language $(V,\tau)$ and $|V|=k$. By the tree model property, there is a $k$-tree $M$ and a team $A$ such that $\mathbb{M}=(M,A)$ is a dependence model satisfying $\varphi$ at the root assignment. We cut this tree at length 3 and obtain the cut-off model whose size is upper bounded by $k(b+b^2+b^3)$, where $b\in\mathbb{N}$ is the branching degree of the $k$-tree $\mathbb{M}$. Note that $b$ itself has $m\times 2^k$ as upper bound, where $m:=|\{\Delta\subseteq\Phi\;|\;\Delta\;\textrm{is a}\;\Phi\textrm{-type}\}|$ and $\Phi=Cl(\varphi)$. It follows that the size of the cut-off model is already exponential in the size of the variables $|V|$.

Now construct the Herwig extension $\mathbb{M}_{cut}^+=(M_{cut}^+,A_{cut}^+)$ as above. Using the bound given in \cite{Herwig1998ExtendingPI}, we get that $|M_{cut}^+|\leq itexp(2r-1,p(|M_{cut}|)$ is upper bounded by an iterated exponential of a polynomial function $p$ of degree $r$ of $|M_{cut}|$, where $r$ is the maximal arity of predicates in $\tau$. By theorem 4.2, $\mathbb{M}$ and $\mathbb{M}_{cut}^+$ are-bisimilar. As dependence bisimulations are always total, there is some assignment $s\in A_{cut}^+$ with $(s,v_{\langle\Sigma_0\rangle})\in Z$. By the invariance result above (proposition 3.1), it follows that $\mathbb{M}_{cut}^+,s\models\varphi$.
\end{proof}

\section{Conclusion}
We have introduced dependence bisimulations and have shown that this notion characterizes LFD as a fragment of FOL. Furthermore, we have shown that LFD has the finite (or bounded) model property, by a new application of Herwig's theorem and a tree-model property established in \cite{BaltagvBenthemLFD}. The same strategy can be used to carry out a direct proof of the FMP through the equivalent \textit{modal} semantics.\footnote{The proof of this can be found in an extended version of this paper (arXiv:2107.06042).} With minor adaptations, the proof goes through, though we need to appeal to a more general version of Herwig's theorem (theorem 5 in \cite{Herwig1998ExtendingPI}) to ensure that the Herwig extension is a tree in order to obtain a standard relational model from it.
By reducing the maximal arity $r$ to $2$, going through the modal semantics significantly lowers the upper bound on the size of the Herwig extension to being singly exponential in the size of the cut-off model.

While LFD only adds local dependence atoms $D_Xy$ to CRS, extensions of CRS with other local versions of atomic dependency properties have been considered in \cite{local_deps}.\footnote{We will consider only the logics defined in \cite{local_deps} that are closed under negation, i.e. those $L[\Omega]$ for which $\Omega$ is closed under negation.} The authors show that LFD extended with either equality or inclusion is undecidable and that the extension of CRS with both inclusion and equality is contained in GF. CRS with independence atoms was shown undecidable in \cite{BaltagvBenthemLFD}, resulting in a complete characterization of the satisfiability problems of such logics. The same paper also studies the \textit{model-checking} problem for such logics, and shows it to be PTIME-complete in restriction to finitely many variables. However,  this tight bound is only obtained on the assumption that the local atoms considered (i.e. inclusion, dependence, independence and equality) are all efficiently checkable.

One open problem is to determine the computational complexity of the satisfiability problem for LFD. It seems that, with a few adaptations, the satisfiability test for GF given in \cite{grAdel_1999} can be used for the case of LFD. Indeed, the 'witnesses for satisfiability' defined there closely resemble type models. A more conceptual challenge is connecting the \textit{qualitative} notion of dependence studied by LFD to probabilistic, i.e. \textit{quantitative} notions of correlation and dependence.
\newpage

\bibliography{biblio.bib}
\bibliographystyle{eptcs}
\end{document}